\newcommand{\uli}[1]{}
\newcommand{\gs}[1]{}
\begin{document}

\title{Generating Posets beyond \N}

\author{Uli Fahrenberg$^1$, Christian Johansen$^2$, Georg
  Struth$^3$, Ratan Badahur Thapa$^2$}

\authorrunning{Fahrenberg, Johansen, Struth and Thapa}

\institute{{\'E}cole Polytechnique, Palaiseau, France \and University
    of Oslo, Norway \and University of Sheffield, UK}

\maketitle

\begin{abstract}
  We introduce iposets---posets with interfaces---equipped with a
  novel gluing composition along interfaces and the standard parallel
  composition. We study their basic algebraic properties as well as
  the hierarchy of gluing-parallel posets generated from singletons by
  finitary applications of the two compositions. We show that not only
  series-parallel posets, but also interval orders, which seem more
  interesting for modelling concurrent and distributed systems, can be
  generated, but not all posets. Generating posets is also important
  for constructing free algebras for concurrent semirings and Kleene
  algebras that allow compositional reasoning about such
  systems.
\end{abstract}


\section{Introduction}

This work is inspired by Tony Hoare's programme of building graph
models of concurrent Kleene algebra
($\CKA$)~\cite{DBLP:journals/jlp/HoareMSW11} for real-world
applications. $\CKA$ extends the sequential compositions,
nondeterministic choices and unbounded finite iterations of imperative
programs modelled by Kleene algebra into concurrency, adding
operations of parallel composition and iteration, and a weak
interchange law for the sequential-parallel interaction. Such algebras
have a long history in concurrency theory, dating back at least to
Winkowski~\cite{DBLP:journals/ipl/Winkowski77}.  Commutative Kleene
algebra---the parallel part of $\CKA$---has been investigated by
Pilling and Conway~\cite{Conway71}. A double semiring with weak
interchange---$\CKA$ without iteration---has been introduced by
Gischer~\cite{DBLP:journals/tcs/Gischer88}; its free algebras have
been studied by Bloom and
\'Esik~\cite{DBLP:journals/tcs/BloomE96a}. $\CKA$, like Gischer's
concurrent semiring, has both interleaving and true concurrency
models, e.g. shuffle as well as pomset languages. Series-parallel
pomset languages, which are generated from singletons by finitary
applications of sequential and parallel compositions, form free
algebras in this
class~\cite{DBLP:journals/corr/LaurenceS17,DBLP:conf/esop/KappeB0Z18}
(at least when parallel iteration is ignored). The inherent
compositionality of algebra is thus balanced by the generative
properties of this model. Yet despite this and other theoretical work,
applications of $\CKA$ remain rare.

One reason is that series-parallel pomsets are not expressive enough
for many real-world applications: even simple producer-consumer
examples cannot be modelled~\cite{DBLP:journals/tcs/LodayaW00}.
\emph{Tests}, which are needed for the control structure of concurrent
programs and as assertions, are hard to capture in models of $\CKA$
(see~\cite{DBLP:journals/jlp/JipsenM16} and its discussion
in~\cite{DBLP:conf/concur/KappeBRSWZ19}). Finally, it remains unclear
how modal operators could be defined over graph models akin to pomset
languages, which is desirable for concurrent dynamic algebras and
logics beyond alternating
nondeterminism~\cite{DBLP:journals/jacm/Peleg87,DBLP:journals/tocl/FurusawaS15}.

A natural approach to generating more expressive pomset languages is
to ``cut across'' pomsets in more general ways when (de)composing
them. This can be achieved by (de)composing along interfaces, and this
idea can be traced back again to
Winkowski~\cite{DBLP:journals/ipl/Winkowski77}; see
also~\cite{DBLP:books/daglib/0030804, DBLP:conf/birthday/FioreC13,
  DBLP:journals/corr/Mimram15} for interface-based compositions of
graphs and posets, or~\cite{DBLP:conf/RelMiCS/HoareSMSVZO14,
  DBLP:conf/mpc/MollerH15, DBLP:conf/utp/MollerHMS16} for recent
interface-based graph models for $\CKA$.  As a side effect, interfaces
may yield notions of tests or modalities.  When they consist of
events, cutting across them presumes that they extend in time and thus
form intervals.  Interval
orders~\cite{Wiener14,journals/mpsy/Fishburn70} of events with
duration have been applied widely in partial order semantics of
concurrent and distributed systems~\cite{DBLP:journals/dc/Lamport86,
  DBLP:journals/jacm/Lamport86a, DBLP:conf/parle/GlabbeekV87,
  DBLP:conf/ifip2/Glabbeek90, DBLP:journals/dc/Vogler91,
  DBLP:books/sp/Vogler92, DBLP:journals/tcs/JanickiK93} and the
verification of weak memory
models~\cite{DBLP:journals/toplas/HerlihyW90}, yet generating them
remains an open problem~\cite{DBLP:journals/iandc/JanickiY17}.

Our main contribution lies in a new class and algebra of posets with
interfaces (\emph{iposets}) based on these ideas. We introduce a new
gluing composition that acts like standard serial po(m)set composition
outside of interfaces, yet glues together interface events, thus
composing events that did not end in one component with those that did
not start in the other one. Our definitions are categorical so that
isomorphism classes of posets are considered ab initio. Their
decoration with labels is then trivial, so that we may focus on posets
instead of pomsets.

Our main technical results concern the hierarchy of gluing-parallel
posets generated by finitary applications of this gluing composition
and the standard parallel composition of po(m)sets, starting from
singleton iposets.\footnote{There is only one singleton poset, but
  with interfaces, there are \emph{four} singleton iposets.}  It is
obvious that all series-parallel pomsets can be generated, but also
all interval orders are captured at the second alternation level of
the hierarchy. Beyond that, we show that the gluing-parallel hierarchy
does not collapse and that posets with certain zigzag-shaped induced
subposets are excluded.  Yet a precise characterisation of the
generated (i)posets remains open. Series-parallel posets, by
comparison, exclude precisely those posets with induced \N-shaped
subposets; interval orders exclude precisely those with induced
subposets $\twotwo$, which makes the two classes incomparable.
Iposets thus retain at least the pleasant compositionality properties
of series-parallel pomsets and the wide applicability of interval
orders in concurrency and distributed computing.

In addition, we establish a bijection between isomorphism classes of
interval orders and certain equivalence classes of interval
sequences~\cite{DBLP:conf/ifip2/Glabbeek90}, and we study the basic
algebraic properties of iposets, including weak interchange laws and
a Levi lemma. The relationship between gluing-parallel ipo(m)set
languages and $\CKA$ is left for another article.

\section{Posets and Series-Parallel Posets}
\label{se:posets}

A \emph{poset} $(P,\mathord{\le})$ is a set $P$ equipped with a
\emph{partial order} $\mathord\le$; a reflexive, transitive,
antisymmetric relation $\mathord\le$ on $P$.  A \emph{morphism} of
posets $P$ and $Q$ is an order-preserving function $f: P\to Q$, that
is, $x\le_P y$ implies $f( x)\le_Q f( y)$.  Posets and their morphisms
define the category $\Pos$.

A poset is \emph{linear} if each pair of elements is comparable with
respect to its order.  We write $\mathord<$ for the strict part of
$\le$. We write $[ n]$, for $n\ge 1$, for the \emph{discrete $n$-poset}
$(\{ 1,\dotsc, n\},\mathord{\le})$, which satisfies $i\le
j\Leftrightarrow i= j$. Additionally,
$[ 0]= \emptyset$.

The 
isomorphisms in $\Pos$ are \emph{order bijections}: bijective
functions $f: P\to Q$ for which
$x\le_P y\Leftrightarrow f( x)\le_Q f( y)$. We write $P\cong Q$ if
posets $P$ and $Q$ are isomorphic. We generally consider posets up-to
isomorphism and assume, moreover, that all posets are finite.
 
Concurrency theory often considers (isomorphism classes of) posets
with points labelled by letters from some alphabet, which represent
actions of some concurrent system. These are known as \emph{partial
  words} or \emph{pomsets}.  As we are mainly interested in structural
aspects of concurrency, we ignore such labels.

Series-parallel posets form a well investigated class that can be
generated from the singleton poset by finitary applications of two
compositions.  Their labelled variants generalise rational languages
into concurrency. For arbitrary posets, these compositions are defined
as follows.

\begin{definition}
  Let $P_1=( P_1, \mathord{ \le_1})$ and $P_2=( P_2, \mathord{ \le_2})$ be
  posets. 
\begin{enumerate}[nosep]
\item Their \emph{serial composition} is the poset
  $P\pomser Q=( P\sqcup Q, \mathord\le_1\cup \mathord\le_2\cup
  P_1\times P_2)$. 
\item  Their \emph{parallel composition} is the poset
  $P_1\otimes P_2=( P_1\sqcup P_2,\mathord\le_1\cup \mathord\le_2)$.
\end{enumerate}
\end{definition}

Here, $\sqcup$ means disjoint union (coproduct) of sets.  We
generalise serial composition to a gluing composition in
Section~\ref{se:iposets}, after equipping posets with interfaces.

Serial and parallel compositions respect isomorphism, and $[ n+ m]$ is
isomorphic to $[ n]\otimes[ m]$ with isomorphism
$\varphi_{ n, m}:[ n+ m]\to[ n]\otimes[ m]$ given by
\begin{equation*}
  \varphi_{ n, m}( i)=
  \begin{cases}
    i_{[ n]} &\text{if } i\le n\,, \\
    ( i- n)_{[ m]} &\text{if } i> n\,.
  \end{cases}
\end{equation*}


By definition, a poset is \emph{series-parallel} (an \emph{sp-poset})
if it is either empty or can be obtained from the singleton poset by
applying the serial and parallel compositions a finite number of
times.  It is well known~\cite{DBLP:journals/siamcomp/ValdesTL82,
  DBLP:journals/fuin/Grabowski81} that a poset is series-parallel iff
it does not contain the induced subposet
$\N= \pomset{\cdot \ar[r] & \cdot \\ \cdot \ar[r] \ar[ur] &
  \cdot}$.\footnote{This means that there is no injection $f$ from
  $\N$ satisfying $x\le y\Leftrightarrow f(x)\le f(y)$.}  

Sp-po(m)sets form bi-monoids with respect to serial and parallel
composition, and with the empty poset as shared unit---in fact the
free algebras in this class. Compositionality of the recursive
definition of sp-po(m)sets is thus reflected by the compositionality
of their algebraic properties, which is often considered a desirable
property of concurrent systems~\cite{DBLP:books/sp/Vogler92}.  Yet
sp-posets are, in fact, too compositional for many applications: even
simple consumer-producer problems inevitably generate
$\N$'s~\cite{DBLP:journals/tcs/LodayaW00}, as shown in
Fig.~\ref{fi:prodcon} which contains the $\N$ spanned by
$c_1$, $c_2$, $p_2$, and $p_3$ as an induced subposet among others.

\begin{figure}[tbp]
  \centering
  \begin{tikzpicture}[->, >=latex', x=1.5cm]
    \foreach \x in {1,2,3,4} \node (p\x) at (\x,0) {$p_\x$};
    \foreach \x in {1,2,3,4} \node (c\x) at (\x,-1) {$c_\x$};
    \node (p5) at (5.2,0) {$\cdots$};
    \node (c5) at (5.2,-1) {$\cdots$};
    \foreach \x in {1,2,3,4} \path (p\x) edge (c\x);
    \foreach \i/\j in {1/2,2/3,3/4,4/5} \path (p\i) edge (p\j);
    \foreach \i/\j in {1/2,2/3,3/4,4/5} \path (c\i) edge (c\j);
  \end{tikzpicture}
  \caption{The producer-consumer example}
  \label{fi:prodcon}
\end{figure}
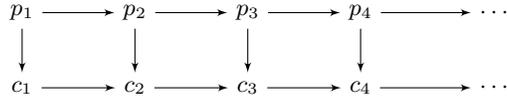


\section{Interval orders and  interval sequences}\label{S:interval-orders}

Interval orders~\cite{Wiener14, journals/mpsy/Fishburn70} form another
class of posets that are ubiquitous in concurrent and distributed
computing. Intuitively, they are isomorphic to sets of intervals on
the real line that are ordered whenever they do not overlap.

\begin{definition}
  An \emph{interval order} is a relational structure $(P,<)$ with $<$
  irreflexive such that $w< y$ and $x< z$ imply $w< z$ or
  $x< y$, for all $w,x,y,z\in P$.
\end{definition}
Transitivity of $<$ follows.  An alternative geometric
characterisation is that interval orders are precisely those posets
that do not contain the induced subposet
$\twotwo= \pomset{\cdot \ar[r] & \cdot \\ \cdot \ar[r] & \cdot}$.

The intuition is captured by Fishburn's
theorem~\cite{journals/mpsy/Fishburn70}, which implies that a finite
poset $P$ is an interval order iff it has an \emph{interval
  representation}: a pair of functions $b,e:P\to Q$ into some linear
order $(Q,<_Q)$ such that $b(x)<_Q e(x)$, for all $x\in P$, and
$x<_P y \Leftrightarrow e(x)<_Q b(y)$, for all $x,y\in P$. By the
first condition, pairs $(b(x),e(x))$ correspond to intervals
$I(x)=[b(x),e(x)]$ in $Q$; by the second condition, $x<_P y$ iff
$I( x)$ lies entirely before $I(y)$ in $Q$.

We write $\irep(P)$ for the set of interval representations of $P$.
Each representation can be rearranged such that all endpoints of
intervals are distinct~(\cite{GolumbicT04}, Lemma 1.5). We henceforth
assume that all interval presentations have this property.  It then
holds that $|Q|=2|P|$, and we can fix $Q$ as the target type of any
interval representation of $P$.


Finally, with relation $\sqsubset$ on the set of maximal
antichains of  poset $P$ given by
\begin{equation*}
  A\sqsubset B \Leftrightarrow (\forall x\in A\setminus B.\forall y\in
  B\setminus A.\
  x<y),
\end{equation*}
it has been shown that $P$ is an interval order iff $\sqsubset$ is a
strict linear order~\cite{book/Fishburn85}.

Interval orders also occur implicitly in the ST-traces of Petri
nets~\cite{DBLP:conf/ifip2/Glabbeek90}. In a pure order-theoretic
setting, these are \emph{interval sequences}, that is,
sequences of $b(x)$ and $e(x)$, with $x$ from some finite set $P$, in
which each $b(x)$ occurs exactly once and each $e(x)$ at most once and
only after the corresponding $b(x)$. An interval sequence is
\emph{closed} if each $e(x)$ occurs exactly
once~\cite{DBLP:conf/ifip2/Glabbeek90,DBLP:books/sp/Vogler92}.  An
\emph{interval trace}~\cite{DBLP:journals/iandc/JanickiY17} is an
equivalence class of interval sequences modulo the relations
$b(x)b(y) \approx b(y)b(x)$ and $e(x)e(y) \approx e(y)e(x)$ for all
$x,y\in P$. We write $\approx^\ast$ for the congruence generated by
$\approx$ on interval sequences. We identify interval sequences and
interval traces with the Hasse diagrams of their linear orders over
$Q$.

\begin{lemma}
  Let $P$ be an interval order and $(b,e)\in \irep(P)$.  Then
  $(Q, <_Q)$ is a closed interval sequence.
\end{lemma}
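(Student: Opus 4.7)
The plan is to verify directly that $(Q,<_Q)$ satisfies each defining property of a closed interval sequence: that each $b(x)$ appears exactly once, each $e(x)$ appears exactly once, and each $e(x)$ appears after its matching $b(x)$.

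First I would invoke the standing assumption (made just after Fishburn's theorem in the text, via Lemma~1.5 of \cite{GolumbicT04}) that all endpoints of the representation are distinct. This means $b$ is injective, $e$ is injective, and the images $b(P)$ and $e(P)$ are disjoint. Combined with the fact, noted in the text, that $|Q|=2|P|$ can be fixed once distinctness is assumed, this forces the equality $Q = b(P)\sqcup e(P)$. Hence, reading the elements of $Q$ left-to-right in the linear order $<_Q$ produces a sequence of symbols in which each $b(x)$ occurs exactly once and each $e(x)$ occurs exactly once.

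Second, the defining condition $b(x)<_Q e(x)$ of an interval representation says precisely that, for each $x\in P$, the symbol $e(x)$ comes after $b(x)$ in this sequence. Together with the counting from the previous step, these are exactly the defining clauses of a closed interval sequence as given in the paragraph preceding the lemma. Identifying interval sequences with Hasse diagrams of linear orders over $Q$, as stipulated in the text, completes the argument.

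The whole argument is essentially an unpacking of definitions; there is no substantial obstacle. The only point deserving care is making sure the distinctness normalisation is genuinely available (so that $b$ and $e$ are injective with disjoint images and $|Q|=2|P|$) — without it, one would have to interpret ``occurs exactly once'' up to multiplicity and the correspondence would fail.
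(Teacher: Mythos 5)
Your proof is correct and is exactly the definition-unpacking the paper has in mind: its own proof of this lemma is literally ``Trivial,'' and your write-up supplies precisely the details being elided (the distinctness normalisation giving $Q=b(P)\sqcup e(P)$ with each symbol occurring once, and $b(x)<_Q e(x)$ giving the ordering clause). No substantive difference in approach.
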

\begin{proof}
  Trivial.\qed
\end{proof}
We write $\sigma_{(b,e)}(P)$ for the interval sequence of interval
order $P$ and $(b,e)\in \irep(P)$, and $\Sigma(P)$ for the set of all
interval sequences of interval representations of $P$.
\begin{lemma}\label{P:irep1}
  If $\sigma\in \Sigma(P)$ and $\sigma\approx^\ast \sigma'$, then
  $\sigma'\in \Sigma(P)$. 
\end{lemma}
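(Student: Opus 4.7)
The plan is to induct on the length of a derivation
$\sigma= \sigma_0\approx \sigma_1\approx \dotsb\approx \sigma_k= \sigma'$,
reducing the claim to the single-step case: if
$\sigma\in \Sigma( P)$ and $\sigma\approx \sigma'$ via one elementary
swap, then $\sigma'\in \Sigma( P)$. Since $\approx$ is symmetric,
this is enough.

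For the single-step case I would start from some
$( b, e)\in \irep( P)$ with $\sigma_{( b, e)}( P)= \sigma$. By an
obvious duality it suffices to treat the case where the swap exchanges
two adjacent $b$-markers; say $b( x)$ sits at position $i$ and
$b( y)$ at position $i+ 1$ of $Q$. I would then define
$b'\colon P\to Q$ by $b'( x)= i+ 1$, $b'( y)= i$ and
$b'( z)= b( z)$ otherwise, and set $e'= e$. By construction
$\sigma_{( b', e')}( P)= \sigma'$, so everything hinges on showing
$( b', e')\in \irep( P)$. For $b'( z)<_Q e'( z)$ with
$z\in\{ x, y\}$, distinctness of all endpoints gives
$e( x)\ge i+ 2> b'( x)$ and symmetrically for $y$. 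For the defining
equivalence $u<_P v\Leftrightarrow e'( u)<_Q b'( v)$, the key remark
is that swapping two adjacent endpoints of the same type cannot move
any third endpoint from one side of the pair to the other: no
$e( z)$ with $z\notin\{ x, y\}$ lies at position $i$ or $i+ 1$, so
$e( z)<_Q b( x)\Leftrightarrow e( z)<_Q b'( x)$ and similarly for
$b( y)$.

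The only point that needs genuine attention is the pair $( x, y)$
itself. Here I would observe that $e( x)\ge i+ 2> b( y)$ and
$e( y)> b( y)= i+ 1> b( x)$, so that neither $x<_P y$ nor $y<_P x$
holds in $( b, e)$; the same bounds show that neither holds in
$( b', e')$. The main obstacle, such as it is, is precisely this
bookkeeping; there is no deeper combinatorial content, and the
$e$-swap case follows by a verbatim dual argument.
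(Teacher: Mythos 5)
Your proposal is correct and follows essentially the same route as the paper's own (sketched) proof: reduce to a single elementary swap, realise the swapped sequence by exchanging the $b$-values at $x$ and $y$ while keeping $e$ fixed, check that this still yields an interval representation, handle the $e$-swap case by duality, and close under $\approx^\ast$ by induction. Your explicit positional bookkeeping merely fills in the verification that the paper leaves implicit.
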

\begin{proof}[sketch]
  We show that $\sigma\in \Sigma(P)$ and $\sigma\approx \sigma'$
  imply $\sigma'\in \Sigma(P)$.  Suppose that
  $\sigma=\sigma_1b(x)b(y)\sigma_2$ and
  $\sigma'= \sigma_1b(y)b(x)\sigma_2$ and that $(b,e)\in \irep(P)$
  generates $\sigma$. Then $(b',e)$ with
  \begin{equation*}
    b'(z) =
    \begin{cases}
      b(y), & \text{ if } z = x,\\
b(x), & \text{ if } z = y,\\
b(z), & \text{ otherwise}
    \end{cases}
  \end{equation*}
  is in $\irep(P)$, as $b'(x) <_Q e(x)$, $b'(y) <_Q e(y)$ and,
  for all $v,w\in P$, $v <_P w \Leftrightarrow e(v) <_P b(w)$ still
  holds. In addition, $(b',e)$ generates $\sigma'$. An analogous
  result for $\sigma=\sigma_1e(x)e(y)\sigma_2$ and
  $\sigma'= \sigma_1e(y)e(x)\sigma_2$ holds by opposition. The result
  for $\approx^\ast$ follows by a simple induction. \qed
\end{proof}
\begin{lemma}\label{P:irep2}
  Let $P$ be an interval order. If $(b,e),(b',e')\in \irep(P)$ assign $b$ and $e$ to elements of $P$ in
  interval sequences, then
  $\sigma_{(b,e)}(P) \approx^\ast \sigma_{(b',e')}(P)$.
\end{lemma}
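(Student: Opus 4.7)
The plan is to proceed by induction on $n - \ell(\sigma,\sigma')$, where $n=2|P|$ is the common length of the two closed interval sequences $\sigma = \sigma_{(b,e)}(P)$ and $\sigma' = \sigma_{(b',e')}(P)$, and $\ell(\sigma,\sigma')$ is the length of their longest common prefix. If $\ell(\sigma,\sigma')=n$ then $\sigma=\sigma'$ and we are done by reflexivity of $\approx^\ast$. Otherwise I write $\sigma=\pi\alpha\tau$ and $\sigma'=\pi\beta\tau'$ with $\pi$ the longest common prefix and $\alpha\neq\beta$. Since $\sigma$ and $\sigma'$ are permutations of the same underlying set of $2|P|$ symbols, $\beta$ occurs somewhere in $\alpha\tau$, so I can factor $\alpha\tau=\gamma_1\beta\gamma_2$ with $\beta\notin\gamma_1$ and $\gamma_1$ non-empty.

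The key observation is that any interval representation forces $e(z)<_Q b(y)$ iff $z<_P y$; hence both $\sigma$ and $\sigma'$ must agree on the relative order of every pair consisting of a $b$-event and an $e$-event. From this I will derive the claim that every symbol of $\gamma_1$ has the same type as $\beta$. Suppose $\beta=b(y)$ and, for contradiction, that some $\delta=e(z)$ lies in $\gamma_1$. Then $e(z)$ precedes $b(y)$ in $\sigma$, forcing $z<_P y$. But in $\sigma'$ we have $\beta=b(y)$ at position $|\pi|+1$ while $\delta$ lies strictly later, so $b(y)$ precedes $e(z)$ in $\sigma'$, forcing $z\not<_P y$. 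Contradiction. The case $\beta=e(y)$ is dual.

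Granting the claim, $\beta$ can be moved to the front of $\gamma_1\beta\gamma_2$ by a finite sequence of swaps, each of which is either a $b(\cdot)b(\cdot)$ or an $e(\cdot)e(\cdot)$ transposition, i.e., an $\approx$-step. Call the resulting sequence $\sigma''=\pi\beta\gamma_1\gamma_2$; then $\sigma\approx^\ast\sigma''$ and, by Lemma~\ref{P:irep1}, $\sigma''\in\Sigma(P)$. Since $\sigma''$ shares the strictly longer prefix $\pi\beta$ with $\sigma'$, picking any interval representation generating $\sigma''$ and applying the induction hypothesis gives $\sigma''\approx^\ast\sigma'$, and transitivity finishes the step.

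The main obstacle is the claim about the composition of $\gamma_1$: it is the only place where the combinatorics peculiar to interval orders—the rigid interleaving of $b$'s and $e$'s forced by the defining biconditional—is used, and indeed its analogue fails for general posets. Everything else (the induction setup, the bubbling, invoking Lemma~\ref{P:irep1}) is bookkeeping.
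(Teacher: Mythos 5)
Your proof is correct and rests on the same key fact as the paper's own sketch: both sequences agree on the relative order of every $b$--$e$ pair (from $b(x)\prec e(x)$ and $e(x)\prec b(y)\Leftrightarrow x<_P y$), so they can differ only by transpositions of adjacent same-type symbols. Your longest-common-prefix induction, together with the appeal to Lemma~\ref{P:irep1} to keep the intermediate sequence in $\Sigma(P)$, merely makes precise the final rearrangement step that the paper leaves implicit.
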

\begin{proof}[sketch]
  Let $\prec_1$ and $\prec_2$ be the orderings of the interval
  sequences for $(b,e)$ and $(b',e')$ in $Q$. Then $b(x)\prec_1 e(x)$
  and $b(x)\prec_2 e(x)$ for all $x\in X$, and
  $e(x)\prec_1 b(y) \Leftrightarrow e(x)\prec_2 b(y)$ for all
  $x,y\in X$. It follows that there is no $b(z)$ in $\prec_1$ or
  $\prec_2$ between the positions of $e(x)$ in $\prec_1$ and $\prec_2$
  and, by opposition, there is no $e(z)$ in $\prec_1$ or $\prec_2$
  between the positions of $b(x)$ in $\prec_1$ and $\prec_2$. But this
  means that the positions of $e(x)$ and $b(x)$ can be rearranged by
  $\approx^\ast$. \qed
\end{proof}
\begin{proposition}\label{P:irep}
  If $P$ is an interval order and $(b,e)\in\irep(P)$, then
  $[\sigma_{(b,e)}(P)]_{\approx^\ast} = \Sigma(P)$.  The mapping
  $\varphi$ defined by $\varphi( P)= [\sigma_{(b,e)}(P)]_{\approx^\ast}$
  is a bijection.
\end{proposition}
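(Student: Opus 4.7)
The plan is to derive the set equality $[\sigma_{(b,e)}(P)]_{\approx^\ast} = \Sigma(P)$ directly from the two preceding lemmas, and then to bootstrap this into the bijection statement by exhibiting an inverse construction that rebuilds an interval order from any representative sequence. The equality splits in the obvious way: the inclusion $\subseteq$ is Lemma~\ref{P:irep1} applied to $\sigma_{(b,e)}(P)\in \Sigma(P)$, and the inclusion $\supseteq$ is Lemma~\ref{P:irep2}, since any $\sigma'\in \Sigma(P)$ arises from some $(b',e')\in \irep(P)$ and is therefore $\approx^\ast$-equivalent to $\sigma_{(b,e)}(P)$.

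For the bijection claim, well-definedness of $\varphi$ is immediate from the equality just proved: the $\approx^\ast$-class of $\sigma_{(b,e)}(P)$ does not depend on the choice of $(b,e)\in \irep(P)$, and since interval sequences are identified with Hasse diagrams of their linear orders, the class depends only on the isomorphism class of $P$. I take the codomain to be the set of $\approx^\ast$-equivalence classes of closed interval sequences, which is the natural target in view of the lemma preceding Lemma~\ref{P:irep1}.

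For injectivity, the plan is to show that any representative $\sigma$ determines $P$ up to isomorphism via the recipe: the elements of $P$ are the matched begin/end pairs of $\sigma$, with $x<_P y$ iff $e(x)$ precedes $b(y)$ in $\sigma$. The key observation is that $\approx$ only swaps two adjacent $b$'s or two adjacent $e$'s, never a $b$ with an $e$, so the truth of $e(x)<_Q b(y)$ is preserved and the reconstructed poset is invariant under $\approx^\ast$. For surjectivity, given a closed interval sequence $(Q,<_Q)$, apply the same recipe to obtain a candidate $P$; the one small calculation is the interval-order axiom, which reduces to: if $e(w)<_Q b(y)$ and $e(x)<_Q b(z)$, then by linearity of $<_Q$ and distinctness of endpoints either $e(w)<_Q e(x)<_Q b(z)$ or $e(x)<_Q e(w)<_Q b(y)$, giving $w<_P z$ or $x<_P y$. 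By construction $(b,e)\in \irep(P)$ and $\sigma=\sigma_{(b,e)}(P)$, so $[\sigma]_{\approx^\ast}=\varphi(P)$.

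The main obstacle is conceptual rather than computational: one must be careful about what it means for $\varphi$ to land in a set of equivalence classes, which requires treating interval sequences as abstract Hasse-diagram structures (matched pairs of begin/end events in a linear order) rather than as sequences over a fixed set, so that two isomorphic interval orders literally produce the same sequence. Once this is pinned down, the two earlier lemmas do the heavy lifting for well-definedness and for the forward equality, and the reconstruction recipe supplies both injectivity and surjectivity; the only real piece of arithmetic is the short verification of the interval-order axiom in the surjective step.
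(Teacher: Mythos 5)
Your proof is correct and takes essentially the same route as the paper's, which simply cites Lemmas~\ref{P:irep1} and~\ref{P:irep2} for the set equality (exactly your two inclusions) and appeals to ``properties of interval representations'' for the bijection --- those properties being precisely the reconstruction recipe ($x<_P y$ iff $e(x)$ precedes $b(y)$, invariant under $\approx^\ast$ since $\approx$ never swaps a $b$ past an $e$) that you spell out. You in fact supply more detail than the paper does, including the correct identification of the codomain as $\approx^\ast$-classes of closed interval sequences and the verification of the interval-order axiom in the surjectivity step.
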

\begin{proof}
  By Lemma~\ref{P:irep1} and \ref{P:irep2}, and properties of
  interval representations.\qed
\end{proof}

\section{Posets with interfaces}
\label{se:iposets}

An element $s$ of  poset $( P, \le)$ is \emph{minimal}
(\emph{maximal}) if $v\not\le s$ ($v\not\ge s$) holds for all
$v\in P$.  We write $P_{\min}$ ($P_{\max}$) for the sets of  minimal
(maximal) elements of $P$.

\begin{definition}
  A \emph{poset with interfaces (iposet)} consists of a poset $P$
  together with two injective morphisms 
  \begin{equation*}
    \label{eq:ipos}
    \xymatrix@C=1.5pc@R=0pc{%
      [ n] \ar[dr]^s && [ m]
      \ar[dl]_t \\
      & P &
    }
\end{equation*}
such that $s[ n]\subseteq P_{\min}$ and $t[ m]\subseteq P_{\max}$.
\end{definition}

Injection $s:[ n]\to P$ represents the \emph{source interface} of $P$
and $t:[ m]\to P$ its \emph{target interface}.  We write
$(s,P,t):n\to m$ for the iposet $s:[ n]\to P\from[ m]: t$.

Figure~\ref{fi:iposets} shows some examples of iposets. Elements of
source and target interfaces are depicted as filled half-circles to
indicate the unfinished nature of the events they represent.

\begin{figure}[tp]
\centering
  \begin{tikzpicture}[->, >=latex', x=1.2cm, y=.8cm, label
    distance=-.25cm, shorten <=-3pt, shorten >=-3pt]
    \begin{scope}
      \begin{scope}
        \path[use as bounding box] (0,0) to (1,-1);
        \node (1) at (0,0) {\intpt};
        \node (2) at (1,0) {\intpt};
        \node (3) at (0,-1) {\intpt};
        \node (4) at (1,-1) {\intpt};
        \path (1) edge (2);
        \path (3) edge (2);
        \path (3) edge (4);
      \end{scope}
      \begin{scope}[xshift=3cm]
        \path[use as bounding box] (0,0) to (1,-1);
        \node [label=left:{\tiny $1$}] (1) at (0,0) {\inpt};
        \node (2) at (1,0) {\intpt};
        \node (3) at (0,-1) {\intpt};
        \node (4) at (1,-1) {\intpt};
        \path (1) edge (2);
        \path (3) edge (2);
        \path (3) edge (4);
      \end{scope}
      \begin{scope}[xshift=6cm]
        \path[use as bounding box] (0,0) to (1,-1);
        \node (1) at (0,0) {\intpt};
        \node (2) at (1,0) {\intpt};
        \node [label=left:{\tiny 1}] (3) at (0,-1) {\inpt};
        \node (4) at (1,-1) {\intpt};
        \path (1) edge (2);
        \path (3) edge (2);
        \path (3) edge (4);
      \end{scope}
      \begin{scope}[xshift=9cm]
        \path[use as bounding box] (0,0) to (1,-1);
        \node [label=left:{\tiny 1}] (1) at (0,0) {\inpt};
        \node (2) at (1,0) {\intpt};
        \node [label=left:{\tiny 2}] (3) at (0,-1) {\inpt};
        \node (4) at (1,-1) {\intpt};
        \path (1) edge (2);
        \path (3) edge (2);
        \path (3) edge (4);
      \end{scope}
    \end{scope}
    \begin{scope}[yshift=-1.7cm]
      \begin{scope}
        \path[use as bounding box] (0,0) to (1,-1);
        \node (1) at (0,0) {\intpt};
        \node (2) at (1,0) {\intpt};
        \node (3) at (0,-1) {\intpt};
        \node [label=right:{\tiny 1}] (4) at (1,-1) {\outpt};
        \path (1) edge (2);
        \path (3) edge (2);
        \path (3) edge (4);
      \end{scope}
      \begin{scope}[xshift=3cm]
        \path[use as bounding box] (0,0) to (1,-1);
        \node [label=left:{\tiny 1}] (1) at (0,0) {\inpt};
        \node [label=right:{\tiny 1}] (2) at (1,0) {\outpt};
        \node (3) at (0,-1) {\intpt};
        \node (4) at (1,-1) {\intpt};
        \path (1) edge (2);
        \path (3) edge (2);
        \path (3) edge (4);
      \end{scope}
      \begin{scope}[xshift=6cm]
        \path[use as bounding box] (0,0) to (1,-1);
        \node [label=left:{\tiny 1}]  (1) at (0,0) {\inpt};
        \node [label=right:{\tiny 1}] (2) at (1,0) {\outpt};
        \node [label=left:{\tiny 2}] (3) at (0,-1) {\inpt};
        \node [label=right:{\tiny 2}] (4) at (1,-1) {\outpt};
        \path (1) edge (2);
        \path (3) edge (2);
        \path (3) edge (4);
      \end{scope}
      \begin{scope}[xshift=9cm]
        \path[use as bounding box] (0,0) to (1,-1);
        \node [label=left:{\tiny 1}]  (1) at (0,0) {\inpt};
        \node [label=right:{\tiny 2}] (2) at (1,0) {\outpt};
        \node [label=left:{\tiny 2}] (3) at (0,-1) {\inpt};
        \node [label=right:{\tiny 1}] (4) at (1,-1) {\outpt};
        \path (1) edge (2);
        \path (3) edge (2);
        \path (3) edge (4);
      \end{scope}
    \end{scope}
  \end{tikzpicture}%
  \caption{Eight of 25 different iposets based on poset \N.}
  \label{fi:iposets}
\end{figure}
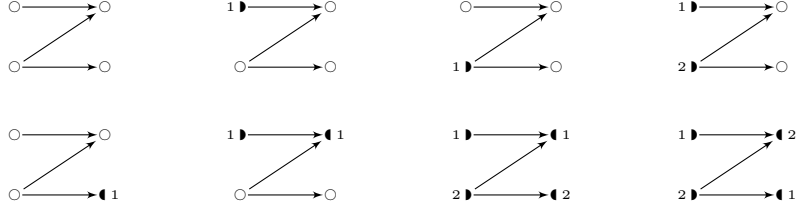

Next we define a sequential gluing composition on iposets whose
interfaces agree and we adapt the standard parallel composition of
posets to iposets.
\begin{definition}
  Let $(s_1,P_1,t_1):n\to m$ and
  $(s_2,P_2 ,t_2):\ell\to k$ be iposets.
  \begin{enumerate}[nosep]
  \item For $m=\ell$, their \emph{gluing composition} is the iposet
    $(s_1,P_1\ipomconcat P_2,t_2):n\to k$ with
    $P_1\ipomconcat P_2 = \left(( P_1\sqcup P_2)_{/ t_1( i)= s_2(
        i)},\mathord\le_1\cup \mathord\le_2\cup( P_1\setminus t_1[
      m])\times( P_2\setminus s_2[ m])\right)$.
  \item Their \emph{parallel composition} is the iposet
    $(s,P_1\otimes P_2,t):n+\ell\to m+k$ with
    $s=\left(s_1\otimes s_2\right)\circ \varphi_{ n, l}$ and
    $t=\left(t_1\otimes t_2\right)\circ \varphi_{ m, k}$.
  \end{enumerate}
\end{definition}

Parallel composition of iposets thus puts components ``side by side'':
it is the disjoint union of posets and interfaces. Gluing composition
puts iposets ``one after the other'', $P_1$ before $P_2$, but glues
their interfaces together (and adds arrows from all points in $P_1$
that are not in its target interface to all points in $P_2$ that are
not in its source interface).  Figures~\ref{fi:ndecomp}
and~\ref{fi:iposetcomp} show examples. The half-circles in source and
target interfaces are glued to circles in the diagrams.

\begin{figure}[bp]
  \centering
  \begin{tikzpicture}[->, >=latex', x=1.3cm, y=.8cm, label distance=-.25cm, shorten <=-3pt, shorten >=-3pt]
    \begin{scope}
      \node (1) at (.5,0) {\intpt};
      \node (2) at (0,-1) {\intpt};
      \node [label=right:{\tiny 1}]  (3) at (1,-1) {\outpt};
      \path (2) edge (3);
      \node at (1.5,-.5) {$\ipomconcat$};
    \end{scope}
    \begin{scope}[shift={(2,0)}]
      \node (1) at (0,0) {\intpt};
      \node [label=left:{\tiny 1}]  (2) at (0,-1) {\inpt}; 
      \node at (.5,-.5) {$=$};
    \end{scope}
   \begin{scope}[shift={(3,0)}]
      \node (1) at (0,0) {\intpt};
      \node (2) at (1,0) {\intpt};
      \node (3) at (0,-1) {\intpt};
      \node (4) at (1,-1) {\intpt};
      \path (1) edge (2);
      \path (3) edge (2);
      \path (3) edge (4);
      \node at (1.5,-.5) {$=$};
    \end{scope}
    \begin{scope}[shift={(5,0)}]
      \node  [label=right:{\tiny 1}] (1) at (0,0) {\outpt};
      \node (2) at (0,-1) {\intpt}; 
      \node at (.5,-.5) {$\ipomconcat$};
    \end{scope}
    \begin{scope}[shift={(6,0)}]
      \node  [label=left:{\tiny 1}] (1) at (0,0) {\inpt};
      \node (2) at (1,0) {\intpt};
      \node (3) at (.5,-1) {\intpt};
      \path (1) edge (2);
    \end{scope}
  \end{tikzpicture}
  \caption{Two different decompositions of the \N.}
  \label{fi:ndecomp}
\end{figure}
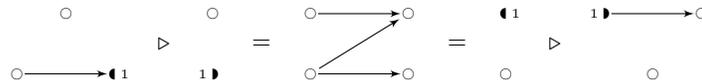

\begin{figure}[t]
  \centering
  \begin{tikzpicture}[->, >=latex', x=1.3cm, y=.8cm, label
    distance=-.25cm,shorten <=-3pt, shorten >=-3pt]
    \begin{scope}
      \begin{scope}
        \begin{scope}
          \path[use as bounding box] (0,0) to (1,-1);
          \node (1) at (0,0) {\intpt};
          \node (2) at (1,0) {\intpt};
          \node (3) at (0,-1) {\intpt};
          \node (4) at (1,-1) {\intpt};
          \path (1) edge (2);
          \path (3) edge (2);
          \path (3) edge (4);
          \node at (1.5,-.5) {$\ipomconcat$};
        \end{scope}
        \begin{scope}[shift={(2,0)}]
          \path[use as bounding box] (0,0) to (1,-1);
          \node (1) at (0,0) {\intpt};
          \node (2) at (1,0) {\intpt};
          \node (3) at (0,-1) {\intpt};
          \node (4) at (1,-1) {\intpt};
          \path (1) edge (2);
          \path (3) edge (2);
          \path (3) edge (4);
          \node at (1.5,-.5) {$=$};
        \end{scope}
        \begin{scope}[shift={(4,0)}]
          \node (1) at (0,0) {\intpt};
          \node (2) at (1,0) {\intpt};
          \node (3) at (2,0) {\intpt};
          \node (4) at (3,0) {\intpt};
          \node (5) at (0,-1) {\intpt};
          \node (6) at (1,-1) {\intpt};
          \node (7) at (2,-1) {\intpt};
          \node (8) at (3,-1) {\intpt};
          \foreach \i/\j in {1/2,2/3,2/7,3/4,5/2,5/6,6/3,6/7,7/4,7/8}
          \path (\i) edge (\j);
        \end{scope}
      \end{scope}
      \begin{scope}[shift={(0,-2)}]
        \begin{scope}
          \path[use as bounding box] (0,0) to (1,-1);
          \node (1) at (0,0) {\intpt};
          \node [label=right:{\tiny 1}] (2) at (1,0) {\outpt};
          \node (3) at (0,-1) {\intpt};
          \node (4) at (1,-1) {\intpt};
          \path (1) edge (2);
          \path (3) edge (2);
          \path (3) edge (4);
          \node at (1.5,-.5) {$\ipomconcat$};
        \end{scope}
        \begin{scope}[shift={(2,0)}]
          \path[use as bounding box] (0,0) to (1,-1);
          \node [label=left:{\tiny 1}] (1) at (0,0) {\inpt};
          \node (2) at (1,0) {\intpt};
          \node (3) at (0,-1) {\intpt};
          \node (4) at (1,-1) {\intpt};
          \path (1) edge (2);
          \path (3) edge (2);
          \path (3) edge (4);
          \node at (1.5,-.5) {$=$};
        \end{scope}
        \begin{scope}[shift={(4,0)}]
          \node (1) at (0,0) {\intpt};
          \node (23) at (1.5,0) {\intpt};
          \node (4) at (3,0) {\intpt};
          \node (5) at (0,-1) {\intpt};
          \node (6) at (1,-1) {\intpt};
          \node (7) at (2,-1) {\intpt};
          \node (8) at (3,-1) {\intpt};
          \foreach \i/\j in {1/23,1/7,23/4,5/23,5/6,6/7,7/4,7/8}
          \path (\i) edge (\j);
        \end{scope}
      \end{scope}
      \begin{scope}[shift={(0,-4)}]
        \begin{scope}
          \path[use as bounding box] (0,0) to (1,-1);
          \node (1) at (0,0) {\intpt};
          \node [label=right:{\tiny 1}] (2) at (1,0) {\outpt};
          \node (3) at (0,-1) {\intpt};
          \node (4) at (1,-1) {\intpt};
          \path (1) edge (2);
          \path (3) edge (2);
          \path (3) edge (4);
          \node at (1.5,-.5) {$\ipomconcat$};
        \end{scope}
        \begin{scope}[shift={(2,0)}]
          \path[use as bounding box] (0,0) to (1,-1);
          \node (1) at (0,0) {\intpt};
          \node (2) at (1,0) {\intpt};
          \node  [label=left:{\tiny 1}] (3) at (0,-1) {\inpt};
          \node (4) at (1,-1) {\intpt};
          \path (1) edge (2);
          \path (3) edge (2);
          \path (3) edge (4);
          \node at (1.5,-.5) {$=$};
        \end{scope}
        \begin{scope}[shift={(4,0)}]
          \node (1) at (0,0) {\intpt};
          \node (23) at (1.5,0) {\intpt};
          \node (3) at (2,0) {};
          \node (4) at (3,0) {\intpt};
          \node (5) at (0,-1) {\intpt};
          \node (6) at (1,-1) {\intpt};
          \node (7) at (2,-1) {\intpt};
          \node (8) at (3,-1) {\intpt};
          \foreach \i/\j in {1/23,1/7,23/4,5/23,5/6,6/4,6/7,23/8,7/8}
          \path (\i) edge (\j);
        \end{scope}
      \end{scope}
      \begin{scope}[shift={(0,-6)}]
        \begin{scope}
          \path[use as bounding box] (0,0) to (1,-1);
          \node (1) at (0,0) {\intpt};
          \node  [label=right:{\tiny 1}] (2) at (1,0) {\outpt};
          \node (3) at (0,-1) {\intpt};
          \node  [label=right:{\tiny 2}] (4) at (1,-1) {\outpt};
          \path (1) edge (2);
          \path (3) edge (2);
          \path (3) edge (4);
          \node at (1.5,-.5) {$\ipomconcat$};
        \end{scope}
        \begin{scope}[shift={(2,0)}]
          \path[use as bounding box] (0,0) to (1,-1);
          \node  [label=left:{\tiny 1}]  (1) at (0,0) {\inpt};
          \node (2) at (1,0) {\intpt};
          \node  [label=left:{\tiny 2}] (3) at (0,-1) {\inpt};
          \node (4) at (1,-1) {\intpt};
          \path (1) edge (2);
          \path (3) edge (2);
          \path (3) edge (4);
          \node at (1.5,-.5) {$=$};
        \end{scope}
        \begin{scope}[shift={(4,0)}]
          \node (1) at (0,0) {\intpt};
          \node (23) at (1.5,0) {\intpt};
          \node (4) at (3,0) {\intpt};
          \node (5) at (0,-1) {\intpt};
          \node (67) at (1.5,-1) {\intpt};
          \node (8) at (3,-1) {\intpt};
          \foreach \i/\j in {1/23,23/4,5/67,5/23,67/8,67/4,1/8}
          \path (\i) edge (\j);
        \end{scope}
      \end{scope}
    \end{scope}
  \end{tikzpicture}
  \caption{Four gluings of different \N s with interfaces.}
  \label{fi:iposetcomp}
\end{figure}
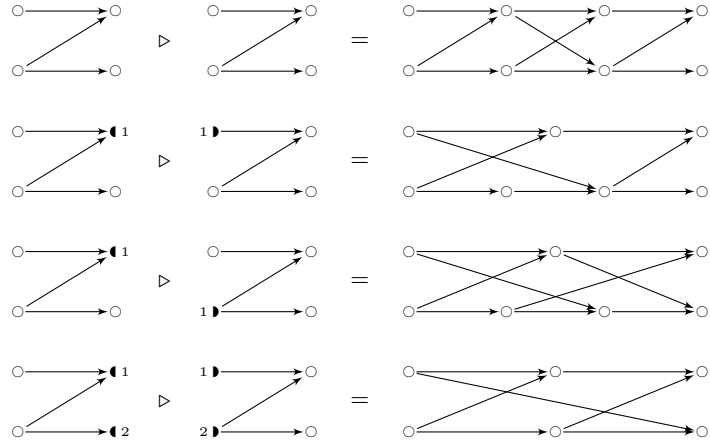

We define \emph{identity iposets} $\id_n=( \id,[ n], \id): n\to n$,
for $n\ge 0$. For convenience, we generalise this notation to other
singleton posets with interfaces: for $k, \ell\le n$, we write
$\idpos k \ell n$ for the iposet $(f_k^n,[n],f_\ell^n):k\to \ell$,
where $f_k^n:[ k]\to[ n]$ is the (identity) injection $x\mapsto x$
(similarly for $f_\ell^n$). Hence $\id_n= \idpos n n n$. We write
$\mcal S = \{ \idpos k \ell 1\mid k, \ell= 0, 1\}$ for the set of
all singleton iposets.

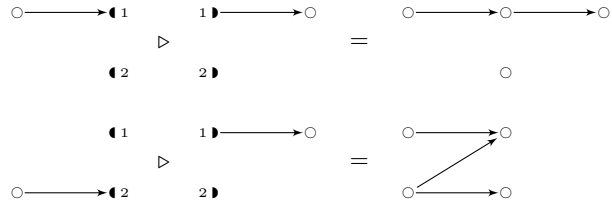
\begin{figure}[b]
  \centering
  \begin{tikzpicture}[->, >=latex', x=1.3cm, y=.8cm, label
    distance=-.25cm,shorten <=-3pt, shorten >=-3pt]
    \begin{scope}
      \begin{scope}
        \node (1) at (0,0) {\intpt};
        \node [label=right:{\tiny 1}] (2) at (1,0) {\outpt};
        \node [label=right:{\tiny 2}] (3) at (1,-1) {\outpt};
        \path (1) edge (2);
        \node at (1.5,-.5) {$\ipomconcat$};
      \end{scope}
      \begin{scope}[shift={(2,0)}]
        \node [label=left:{\tiny 1}] (1) at (0,0) {\inpt};
        \node (2) at (1,0) {\intpt};
        \node [label=left:{\tiny 2}] (3) at (0,-1) {\inpt};
        \path (1) edge (2);
        \node at (1.5,-.5) {$=$};
      \end{scope}
      \begin{scope}[shift={(4,0)}]
        \node (1) at (0,0) {\intpt};
        \node (2) at (1,0) {\intpt};
        \node (3) at (2,0) {\intpt};
        \node (4) at (1,-1) {\intpt};
        \path (1) edge (2);
        \path (2) edge (3);
      \end{scope}
    \end{scope}
    \begin{scope}[shift={(0,-2)}]
      \begin{scope}
        \node [label=right:{\tiny 1}] (3) at (1,0) {\outpt};
        \node (1) at (0,-1) {\intpt};
        \node [label=right:{\tiny 2}] (2) at (1,-1) {\outpt};
        \path (1) edge (2);
        \node at (1.5,-.5) {$\ipomconcat$};
      \end{scope}
      \begin{scope}[shift={(2,0)}]
        \node [label=left:{\tiny 1}] (1) at (0,0) {\inpt};
        \node (2) at (1,0) {\intpt};
        \node [label=left:{\tiny 2}] (3) at (0,-1) {\inpt};
        \path (1) edge (2);
        \node at (1.5,-.5) {$=$};
      \end{scope}
      \begin{scope}[shift={(4,0)}]
        \node (1) at (0,0) {\intpt};
        \node (2) at (1,0) {\intpt};
        \node (3) at (0,-1) {\intpt};
        \node (4) at (1,-1) {\intpt};
        \path (1) edge (2);
        \path (3) edge (4);
        \path (3) edge (2);
      \end{scope}
    \end{scope}
  \end{tikzpicture}
  \caption{Non-isomorphic gluings of symmetric parallel compositions.}
  \label{fi:otimesnotsymm}
\end{figure}

Parallel composition need not be commutative, as the namings of
interfaces in $P\otimes Q$ may differ from those in $Q\otimes P$.  One
can, however, rename interfaces using \emph{symmetries}: iposets
$(s,[ n],t):n\to n$ with $s$ and $t$ bijective.
Figure~\ref{fi:otimesnotsymm} shows two 
parallel compositions where renaming of interfaces and gluing with
another iposet yields  non-isomorphic posets.

Also, gluing and parallel composition need not satisfy an interchange
law:
\begin{equation*}
  (\idpos 0 0 1  \otimes \idpos 0 0 1) \ipomconcat (\idpos 0 0 1 \otimes \idpos 0 0 1) 
=   \pomset{\cdot \ar[r] \ar[dr] & \cdot \\ \cdot\ar[r] \ar[ur] &
  \cdot} 
\neq   \pomset{\cdot \ar[r] & \cdot \\ \cdot\ar[r] & \cdot} 
=(\idpos 0 0 1  \ipomconcat \idpos 0 0 1) \otimes (\idpos 0 0 1\ipomconcat \idpos 0 0 1)\,.
\end{equation*}
Hence iposets do \emph{not} form  (strict) monoidal categories, or even
PROPs, because $\otimes$ is not a tensor.
The situation differs from gluing compositions where interfaces of
iposets are defined by \emph{all} minimal and maximal
elements~\cite{DBLP:journals/ipl/Winkowski77}, and also from
sequential compositions of digraphs with ``partial'' interfaces
similar to ours where interface points glue arrows together and
disappear in these compositions~\cite{DBLP:conf/birthday/FioreC13}.
Both of these give rise to a PROP.

Gluing composition, of course, is not commutative either:
\begin{equation*}
\idpos 0 1 1\ipomconcat \idpos 1 0 1= \idpos 0 0 1 = \pomset{\cdot }\neq 
\pomset{\cdot \ar[r] & \cdot}=\idpos 1 0 1\ipomconcat \idpos 0 1 1
\end{equation*}

\begin{proposition}
  \label{pr:ipos-comp}
  Iposets form a small category with natural numbers as objects,
  iposets $( s, P, t): n\to m$ as morphisms, $\ipomconcat$ as composition,
  and identities $\id_n$.
\end{proposition}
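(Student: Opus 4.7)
The plan is to verify the three category axioms: well-definedness of the composition $\ipomconcat$, the identity laws for $\id_n$, and associativity.

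For well-definedness of $(s_1, P_1, t_1) \ipomconcat (s_2, P_2, t_2)$, I first check that $P_1 \ipomconcat P_2$ carries a partial order. Reflexivity and antisymmetry are immediate from the analogous properties of $\le_1$, $\le_2$, and the fact that crossings run only from $P_1$ to $P_2$. Transitivity is a nine-case analysis on the ``type'' (in $\le_1$, in $\le_2$, or a crossing) of each of two composable pairs; the key observation is that any element in both $P_1$ and $P_2$ is an interface element $t_1(i) = s_2(i)$, hence maximal in $P_1$ and minimal in $P_2$, which pins the remaining configurations and shows that no transitive closure is needed. Then $s_1$ and $t_2$ remain injections into the minima and maxima of $P_1 \ipomconcat P_2$, respectively, because crossings run only from $P_1\setminus t_1[m]$ into $P_2\setminus s_2[m]$, which cannot destroy the minimality of $s_1[n]$ or maximality of $t_2[k]$. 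The iposet $\id_n$ is well-defined since $[n]$ is discrete, making every element both minimal and maximal. For $\id_n \ipomconcat (s,P,t) = (s,P,t)$, the identification $\id(i) = s(i)$ collapses $[n]$ onto $s[n]\subseteq P_{\min}$, the added crossing $([n]\setminus[n]) \times (P\setminus s[n])$ is empty, and the discrete order on $[n]$ contributes nothing new; a symmetric argument handles the right identity law.

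For associativity, both $(P_1\ipomconcat P_2)\ipomconcat P_3$ and $P_1\ipomconcat (P_2\ipomconcat P_3)$ have underlying set obtained from $P_1\sqcup P_2\sqcup P_3$ by imposing $t_1(i)\sim s_2(i)$ and $t_2(j)\sim s_3(j)$, so the two orderings are comparable on the same quotient. I would show they coincide by expanding each crossing into ``pure $P_i$'' and ``interface'' components. Writing $I = \{i : s_2(i)\in t_2[k]\}$ and $J = \{j : t_2(j)\in s_2[m]\}$, bijective via $s_2(i) = t_2(j)$, elements of $s_2[I] = t_1[I] = t_2[J] = s_3[J]$ are identified across all three posets in the quotient. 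A direct decomposition shows that $(P_{12}\setminus t_2[k]) \times (P_3\setminus s_3[k])$ reduces in the quotient to $(P_2\setminus t_2[k])\times (P_3\setminus s_3[k]) \cup (P_1\setminus t_1[m])\times(P_3\setminus s_3[k])$, while $(P_1\setminus t_1[m]) \times (P_{23}\setminus s_2[m])$ dually decomposes to $(P_1\setminus t_1[m]) \times (P_2\setminus s_2[m]) \cup (P_1\setminus t_1[m])\times (P_3\setminus s_3[J])$; the pairs mismatched between the two sides have their right component in the identified interface $s_3[[k]\setminus J] = t_2[[k]\setminus J]\subseteq P_2\setminus s_2[m]$ (or symmetrically), hence coincide with pairs already included via the $(P_1\setminus t_1[m]) \times (P_2\setminus s_2[m])$ crossing.

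The main obstacle is this associativity bookkeeping, as the notion of ``pure'' versus ``interface'' element shifts depending on which gluing is performed first, and interface elements may be triply identified when $s_2[I]$ meets $t_2[J]$. Once these identifications in the quotient are accounted for, the two orders agree and the verification reduces to matching terms. Smallness of the category is then automatic: objects form the set $\N$, and for fixed $n,m$ the isomorphism classes of finite iposets $n\to m$ form a set.
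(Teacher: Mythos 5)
Your proof is correct and follows the only natural route---direct verification of well-definedness, unit laws, and associativity---which is exactly what the paper does, except that the paper omits all details, stating only that checking associativity of $\ipomconcat$ and the existence of units is routine. Your case analysis for transitivity (using that points shared between $P_1$ and $P_2$ are interface points, hence maximal in $P_1$ and minimal in $P_2$) and your bookkeeping of the crossing relations for associativity, including the triply identified interface points, correctly supply the details the paper leaves to the reader.
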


Checking associativity of $\ipomconcat$ and the existence of units is
routine, as is the proof of the next proposition.

\begin{proposition}
  \label{pr:ops-assoc}
  Iposets form a monoid with composition $\otimes$ and unit
  $\id_0$.
\end{proposition}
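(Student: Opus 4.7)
The plan is to verify the two monoid axioms separately, reducing each to the analogous statement for the underlying posets and then checking that the interface data combine compatibly under the isomorphism $\varphi_{n,m}$.

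First I would handle the unit laws. For any iposet $(s,P,t):n\to m$, the parallel composition $\id_0\otimes(s,P,t)$ has underlying poset $[0]\sqcup P=\emptyset\sqcup P$, which is canonically isomorphic to $P$, and its source and target interfaces are the composites of $s\otimes s_{\id_0}$ and $t\otimes t_{\id_0}$ with $\varphi_{0,n}$ and $\varphi_{0,m}$ respectively; since $[0]$ contributes nothing, these reduce to $s$ and $t$ under the canonical isomorphism. The right unit law is symmetric. Since iposets are considered up to isomorphism, this gives $\id_0\otimes P\cong P\cong P\otimes\id_0$.

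For associativity, I would first recall that disjoint union of sets is associative up to the canonical isomorphism $(A\sqcup B)\sqcup C\to A\sqcup(B\sqcup C)$, and that this isomorphism is order-preserving when the disjoint union carries the union of the component orders (as in the definition of $\otimes$ on posets). Thus the underlying posets of $(P_1\otimes P_2)\otimes P_3$ and $P_1\otimes(P_2\otimes P_3)$ are canonically isomorphic. It remains to check that the source and target interface maps agree under this isomorphism. For the source, I have to compare
\begin{equation*}
\bigl((s_1\otimes s_2)\circ\varphi_{n_1,n_2}\bigr)\otimes s_3\ \text{composed with}\ \varphi_{n_1+n_2,n_3}
\end{equation*}
on one side against
\begin{equation*}
s_1\otimes\bigl((s_2\otimes s_3)\circ\varphi_{n_2,n_3}\bigr)\ \text{composed with}\ \varphi_{n_1,n_2+n_3}
\end{equation*}
on the other. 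Both maps send $i\in[n_1+n_2+n_3]$ to the $i$-th interface point in the disjoint union of $P_1,P_2,P_3$ read left-to-right, so they coincide modulo the canonical associator. The same argument works for the target interfaces, and one also checks that the image of each interface map still lands in the minima (resp. maxima) of the combined poset, which is immediate since the orders of the components are not linked in $\otimes$.

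The main bookkeeping obstacle is tracking the renumbering isomorphisms $\varphi_{n,m}$ through the associator; this is what prevents iposets from forming a strictly associative structure on the nose. Once one accepts working up to isomorphism (as the paper does throughout), the verification is routine, and the proposition follows.
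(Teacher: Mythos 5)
Your verification is correct and is precisely the ``routine'' check that the paper declines to spell out (it dismisses this proposition with one sentence alongside Proposition~\ref{pr:ipos-comp}): unit laws via $[0]\sqcup P\cong P$ and associativity via the canonical associator for $\sqcup$, together with the bookkeeping for the renumbering maps $\varphi_{n,m}$ on the interfaces. The only blemish is the swapped order $s\otimes s_{\id_0}$ where the definition gives $s_{\id_0}\otimes s$ for $\id_0\otimes P$, which is harmless since $s_{\id_0}$ is the empty map.
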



A \emph{morphism} of iposets is a commuting diagram
\begin{equation}
  \label{eq:morphipos}
  \vcenter{
    \xymatrix{%
      [ n] \ar[r]^s \ar[d]_\nu & P \ar[d]_f & [ m]
      \ar[l]_t \ar[d]^\mu
      \\
      [ n'] \ar[r]_{ s'} & P' & [ m'] \ar[l]^{ t'}
    }}
\end{equation}
where $\nu$ and $\mu$ are strictly order preserving with respect to
$<_\Nat$ and $f$ is an order morphism.  Intuitively, iposet morphisms
thus preserve interfaces and their order in $\Nat$.  Let $\iPos$
denote the so-defined category.

An iposet morphism $(\nu,f,\mu)$ is an \emph{isomorphism} if $\nu$,
$f$ and $\mu$ are order isomorphisms.  Hence $n=n'$, $m=m'$,
$\nu=\id:n\to n$, and $\mu=\id:m\to m$ in
diagram~\eqref{eq:morphipos}.  As a consequence, we note that iposets
which are related by a symmetry $(s,[ n],t):n\to n$ need not be
isomorphic.

We write $P\cong Q$ if there exists an isomorphism $\varphi:P\to Q$.
The following lemma shows that the two compositions respect
isomorphism.

\begin{lemma}
  Let $P, P', Q, Q'$ be iposets. Then $P\cong P'$ and $Q\cong Q'$ imply
  $P\otimes Q\cong P'\otimes Q'$ and $P\ipomconcat Q\cong P'\ipomconcat Q'$.
\end{lemma}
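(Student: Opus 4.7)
The plan is to unpack what an iposet isomorphism actually is and then show that the component-wise disjoint union of the two given isomorphisms descends to the required isomorphism in each case.

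First I would recall that, by the discussion preceding the lemma, an iposet isomorphism $\varphi:P\to P'$ amounts to an equality of interface arities together with an order isomorphism $f:P\to P'$ of underlying posets satisfying $f\circ s_P = s_{P'}$ and $f\circ t_P = t_{P'}$; the interface components $\nu,\mu$ are forced to be identities because they are bijections on $[n]$ strictly preserving $<_\Nat$. Fix such isomorphisms $f:P\to P'$ and $g:Q\to Q'$.

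For $P\otimes Q\cong P'\otimes Q'$ I would take $h = f\sqcup g$. Since the order on a parallel composition is exactly the disjoint union of the component orders, $h$ is at once an order isomorphism of underlying posets. The source interfaces of $P\otimes Q$ and $P'\otimes Q'$ are both obtained by postcomposing with the same fixed bijection $\varphi_{n,\ell}$, so interface preservation reduces to $h\circ(s_P\otimes s_Q) = s_{P'}\otimes s_{Q'}$, which follows immediately from $f\circ s_P = s_{P'}$ and $g\circ s_Q = s_{Q'}$; the case of target interfaces is analogous.

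For $P\ipomconcat Q\cong P'\ipomconcat Q'$ I would again start with $h = f\sqcup g$ on the disjoint union and then show it descends to the quotient identifying $t_P(i)$ with $s_Q(i)$. Well-definedness follows from $h(t_P(i)) = t_{P'}(i)$ and $h(s_Q(i)) = s_{Q'}(i)$, so the pair of points identified on the source side maps to the pair identified on the target side, and bijectivity is inherited. For order preservation, the parts coming from $\le_P$ and $\le_Q$ are handled by $f$ and $g$, while the added relations $(P\setminus t_P[m])\times(Q\setminus s_Q[m])$ are preserved because $f$ restricts to a bijection $P\setminus t_P[m] \to P'\setminus t_{P'}[m]$ (as $f$ is a bijection sending $t_P[m]$ onto $t_{P'}[m]$), and likewise for $g$. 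Preservation of the outer interfaces $s_P$ and $t_Q$ of the glued iposet is then inherited from the corresponding equations for $f$ and $g$. The only mildly delicate point is the bookkeeping around the quotient and the Cartesian-product portion of the glued order, but both reduce to $f$ and $g$ sending interface points to interface points in the prescribed way, which is just the definition of iposet isomorphism.
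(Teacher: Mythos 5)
Your proposal is correct and follows essentially the same route as the paper: both take the poset components of the given isomorphisms, form their disjoint union, and check that it descends to the quotient in the gluing case via the interface-preservation equations. You simply spell out in more detail the steps the paper dismisses as ``easily seen''---order preservation on the Cartesian-product part of the glued order and preservation of the outer interfaces---which is fine.
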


\begin{proof}
  Let $\varphi: P\to P'$ and $\psi: Q\to Q'$ be (the poset components
  of) isomorphisms.  Define the functions
  $\varphi\otimes \psi: P\sqcup Q\to P'\sqcup Q'$ and
  $\varphi\ipomconcat \psi:( P\sqcup Q)_{/t_P(i)=s_Q(i)}\to (P'\sqcup Q')
  _{/t_{P'}(i)=s_{Q'}(i)}$ as
  \begin{equation*}
    (\varphi \mathop{\Box} \psi)( x)=
    \begin{cases}
      \varphi( x) &\text{if } x\in P\,, \\
      \psi( x) &\text{if } x\in Q\,,
    \end{cases}
  \end{equation*}
  for $\Box \in \{\otimes,\ipomconcat\}$. First, $\varphi\otimes\psi$ is
  obviously an isomorphism.  Second, $\varphi\ipomconcat\psi$ is
  well-defined because $\varphi\circ t_P( i)= \psi\circ s_Q( i)$ for
  all $i\in[ m]$, and easily seen to be an isomorphism as well. \qed
\end{proof}

We write $P\preceq Q$ if there is a bijective (on points)
morphism $\varphi:Q\to P$ between iposets $P$ and $Q$. Intuitively,
$P\preceq Q$ iff $P$ has more arrows and is therefore less parallel
than $Q$, while interfaces are preserved. Similar relations on posets
and pomsets, sometimes called \emph{subsumption}, are well
studied~\cite{DBLP:journals/fuin/Grabowski81,
  DBLP:journals/tcs/Gischer88}. In particular, $\preceq$ is a preorder
on (finite) iposets and a partial order up to isomorphism.

\begin{lemma}\label{P:lax-interchange}
  For iposets $P,P',Q,Q'$, the following lax interchange law
  holds:
\begin{equation*}
(P\otimes  P') \ipomconcat (Q\otimes Q') \preceq (P\ipomconcat Q)\otimes (P'\ipomconcat Q')
\end{equation*}
\end{lemma}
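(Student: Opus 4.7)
The plan is to exhibit an explicit bijective iposet morphism $\varphi$ from the right-hand side to the left-hand side which is the identity on the common underlying set, and then observe that the only difference between the two orders consists of ``cross'' arrows that appear in the left gluing but not in the right.

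First, I would unfold both constructions on points. For the left-hand side $(P\otimes P')\ipomconcat (Q\otimes Q')$, the underlying set is
\[
\bigl((P\sqcup P')\sqcup(Q\sqcup Q')\bigr)_{/\sim}
\]
where $\sim$ identifies $(t_P\sqcup t_{P'})(i)$ with $(s_Q\sqcup s_{Q'})(i)$ for each $i\in[m+m']$; after untangling $\varphi_{m,m'}$ this is the same as identifying $t_P(i)$ with $s_Q(i)$ for $i\in[m]$ and $t_{P'}(j)$ with $s_{Q'}(j)$ for $j\in[m']$. For the right-hand side $(P\ipomconcat Q)\otimes(P'\ipomconcat Q')$, we perform exactly these same two identifications (one inside each gluing) and then take the disjoint union. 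Hence the two underlying sets are canonically equal, and I define $\varphi$ to be this identity bijection.

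Next I would verify interface preservation. On both sides the source interface is $(s_P\otimes s_{P'})\circ\varphi_{n,n'}:[n+n']\to\varphi$, since gluing does not touch the source interface of the left factor and parallel composition takes the coproduct of sources; similarly for targets via $(t_Q\otimes t_{Q'})\circ\varphi_{k,k'}$. So diagram~\eqref{eq:morphipos} commutes with $\nu=\id$ and $\mu=\id$.

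Finally, the only real content is checking order preservation. The order on the right-hand side is generated by: the internal orders $\le_P,\le_{P'},\le_Q,\le_{Q'}$; the gluing cross-pairs $(P\setminus t_P[m])\times(Q\setminus s_Q[m])$ from $P\ipomconcat Q$; and $(P'\setminus t_{P'}[m'])\times(Q'\setminus s_{Q'}[m'])$ from $P'\ipomconcat Q'$. The order on the left-hand side contains all of these, but \emph{also} contains the pairs $\bigl((P\sqcup P')\setminus(t_P\sqcup t_{P'})[m+m']\bigr)\times\bigl((Q\sqcup Q')\setminus(s_Q\sqcup s_{Q'})[m+m']\bigr)$ coming from the outer gluing, which in particular include the two extra ``mixed'' strips $(P\setminus t_P[m])\times(Q'\setminus s_{Q'}[m'])$ and $(P'\setminus t_{P'}[m'])\times(Q\setminus s_Q[m])$. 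Since every generating relation on the right thus lies in the order on the left, $\varphi$ is a morphism, and by definition $(P\otimes P')\ipomconcat(Q\otimes Q')\preceq(P\ipomconcat Q)\otimes(P'\ipomconcat Q')$.

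The only delicate point — and the mild obstacle worth being explicit about — is the well-definedness of $\varphi$ with respect to the two quotients: one must check that the identifications $t_P(i)\sim s_Q(i)$ and $t_{P'}(j)\sim s_{Q'}(j)$ performed sequentially inside the two parallel components on the right coincide, through $\varphi_{m,m'}$, with the single identification performed by the outer gluing on the left. This is an easy bookkeeping check given the definition of $\varphi_{m,m'}$, after which the rest of the argument is routine.
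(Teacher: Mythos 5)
Your proposal is correct and follows essentially the same route as the paper's proof: identify the two underlying point sets via the common quotient, take $\varphi$ to be the identity bijection, and check that every generating relation of $(P\ipomconcat Q)\otimes(P'\ipomconcat Q')$ already appears in $(P\otimes P')\ipomconcat(Q\otimes Q')$, the latter having in addition only the ``mixed'' cross arrows. Your extra remarks on interface preservation and on the compatibility of the two quotients are routine bookkeeping that the paper leaves implicit, so nothing of substance differs.
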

\begin{proof}
  Let $P_\ell = (P\otimes P') \ipomconcat (Q\otimes Q')$ and
  $P_r = (P\ipomconcat Q)\otimes (P'\ipomconcat Q')$.  First,
  $P_\ell= ( P\sqcup Q)_{/ t_P\equiv s_Q}\sqcup( P'\sqcup Q')_{/
    t_{P'}\equiv s_{Q'}}=( P\sqcup Q\sqcup P'\sqcup Q')_{ t_P\equiv
    s_Q, t_{P'}\equiv s_{Q'}} =P_r$,
  by definition of $\otimes$. Hence both posets have the same points,
  and we may choose $\varphi:P_r\to P_\ell$ to be the identity.  It
  remains to show that $\varphi$ is order preserving, which means that
  every arrow in $P_r$ must be in $P_\ell$. 

  Hence suppose $x\le_{P_r} y$, that is, $x\le_{P\ipomconcat Q} y$ or
  $x\le _{P'\ipomconcat Q'} y$.  In the first case, if $x\le_P y$ or
  $x\le_Q y$, then $x\le_{P\otimes P'} y$ or $x\le_{Q\otimes Q'} y$ and
  therefore $x\le_{P_\ell} y$; and if $x\in P\setminus t_P$ and
  $y\in Q\setminus s_Q$, then
  $x\in P\sqcup P'\setminus t_{ P\otimes P'}$ and
  $y\in Q\sqcup Q'\setminus s_{ Q\otimes Q'}$ and therefore
  $x\le_{P_\ell} y$, too. The second case is symmetric. Thus, in
  any case, $x\le_{P_\ell} y$. \qed
\end{proof}

In sum, the algebra of iposets is thus similar to concurrent
monoids~\cite{DBLP:journals/jlp/HoareMSW11}, but $\ipomconcat$ is a partial
operation with many units $\id_k$. As $\otimes$ is not a tensor, the
categorical structure of iposets is somewhat unusual and deserves
further exploration.

\begin{proposition}
  \label{pr_iposets_generalize}
  \label{pr:posiposadj}
  $\Pos$ embeds into $\iPos$ as iposets with both interfaces $[0]$,
  and likewise for morphisms.  The so-defined inclusion functor
  $J: \Pos\to \iPos$ is fully faithful and left adjoint to the
  forgetful functor $F: \iPos\to \Pos$ that maps $(s,P,t)$
  to $P$, hence $\Pos$ is coreflective in $\iPos$.  Under $F$, gluing
  composition of iposets becomes serial composition of posets, and
  parallel composition of iposets becomes that of posets (hence,
  commutative).
\end{proposition}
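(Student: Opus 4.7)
The plan is to construct both functors explicitly, derive full faithfulness directly from the shape of iposet morphisms with empty interfaces, establish the adjunction by a hom-set bijection, and then verify the compositional claims.

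First I would define $J$ on objects by $J(P)=(!_0,P,!_0):0\to 0$, where $!_0:[0]\to P$ is the unique empty map, and on morphisms by $J(f)=(\id_{[0]},f,\id_{[0]})$. The commutativity of diagram~\eqref{eq:morphipos} and the conditions that $s,t$ land in $P_{\min},P_{\max}$ are vacuous since $[0]=\emptyset$. The forgetful functor $F$ is given by $F(s,P,t)=P$ on objects and $F(\nu,f,\mu)=f$ on morphisms; functoriality of both is immediate. For full faithfulness, observe that a morphism $J(P)\to J(Q)$ in $\iPos$ is a triple $(\nu,f,\mu)$ in which $\nu,\mu:[0]\to[0]$ are forced to be the empty map and $f:P\to Q$ is an order morphism, with the defining squares commuting vacuously; hence $(\id,f,\id)\mapsto f$ is a bijection on homs and is inverse to $J$.

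For the adjunction $J\dashv F$ it suffices to exhibit a natural bijection $\mathrm{Hom}_\iPos(J(P),(s,Q,t))\cong \mathrm{Hom}_\Pos(P,Q)$. The key observation is that any morphism $(\nu,f,\mu):J(P)\to(s,Q,t)$ has $\nu:[0]\to[n]$ and $\mu:[0]\to[m]$ forced to be empty maps (irrespective of $n,m$), so the morphism is determined by $f:P\to Q$; naturality in both arguments is a direct check. Equivalently, take as unit $\eta_P=\id_P:P\to FJ(P)$ and as counit $\epsilon_{(s,P,t)}=(\iota_n,\id_P,\iota_m):JF(s,P,t)\to(s,P,t)$, with $\iota_n:[0]\hookrightarrow[n]$ the empty inclusion; both triangle identities reduce to identity compositions. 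Since the inclusion $J$ thereby admits a right adjoint $F$, the subcategory $\Pos$ is coreflective in $\iPos$.

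Finally, for the compositional claims I would specialize to iposets in the image of $J$, namely those with empty interfaces. Then in $P_1\ipomconcat P_2$ no points are identified, since $t_1[0]=s_2[0]=\emptyset$, and the added arrows form $(P_1\setminus\emptyset)\times(P_2\setminus\emptyset)=P_1\times P_2$; this exactly recovers the serial composition $P_1\pomser P_2$, so $F(P_1\ipomconcat P_2)=F(P_1)\pomser F(P_2)$. For parallel composition, the underlying poset of $P_1\otimes P_2$ is always $(P_1\sqcup P_2,\mathord{\le}_1\cup\mathord{\le}_2)$, regardless of interfaces, so $F(P_1\otimes P_2)=F(P_1)\otimes F(P_2)$ unconditionally; commutativity in $\Pos$ then follows since disjoint union of posets is commutative up to isomorphism, the interface naming that obstructs commutativity in $\iPos$ having been forgotten. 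No step is especially hard; the only point requiring care is orienting the adjunction correctly so that the coreflective (rather than reflective) characterization is identified.
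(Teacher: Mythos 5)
Your proposal is correct and follows essentially the same route as the paper's proof: identify $\Pos$ with the iposets having empty interfaces, observe that morphisms out of such objects are determined by their poset component since the interface maps $[0]\to[n]$ are forced, deduce full faithfulness and the hom-set bijection $\iPos(JP,\tilde Q)\cong\Pos(P,F\tilde Q)$ giving $J\dashv F$, and note that with empty interfaces gluing adds all of $P_1\times P_2$ and identifies nothing, recovering serial composition. The paper leaves the unit/counit and the compositional claims as "clear"; you have simply written them out.
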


\begin{proof}
  It is clear that $J$ is a functor.  It is full because any morphism
  $\tilde f$ from $P: 0\to 0$ to $Q: 0\to 0$ in $\iPos$ must have the
  form $( \emptyset, f, \emptyset)= J f$ for some $f$ in $\Pos$.  It
  is faithful because
  $J f = ( \emptyset, f, \emptyset)=( \emptyset, g, \emptyset)= J g$
  implies $f= g$.
    For $P\in \Pos$ and $\tilde Q\in \iPos$, $J$ induces a natural
  bijection $J: \Pos( P, F \tilde Q)\cong \iPos( J P, \tilde Q)$,
  hence $J$ and $F$ are indeed adjoint.  The last claims about the
  operations are clear. \qed
\end{proof}


\section{Further Properties of Iposets}

We now derive additional algebraic properties of iposets, before
turning to the set of iposets generated by gluing and parallel
composition from singleton iposets.

For an iposet $P$ with order relation $\le$
we write $\mathord{\para}= \mathord{\not\le}\cap
\mathord{\not\ge}$. Hence $x\para y$ iff $x$ and $y$ are unrelated and
therefore \emph{independent}. 

In addition to the lax interchange in Lemma~\ref{P:lax-interchange},
we prove an equational interchange law that shows that the equational
theory of $\iPos$ as given by the bimonoidal laws in
Propositions~\ref{pr:ipos-comp} and~\ref{pr:ops-assoc} is not free.
The lemmas further below then show that this law  is the
\emph{only} non-trivial additional identity.

\begin{lemma}[Interchange]
  \label{le:interchange-eq}
  For all iposets $P$, $Q$ and $k, \ell\in\{ 0, 1\}$,
 \begin{equation*}
   ( \idpos k 1 1\otimes P)\ipomconcat( \idpos 1 \ell 1\otimes Q)= \idpos k
   \ell 1\otimes( P\ipomconcat Q)\,.
\end{equation*}
\end{lemma}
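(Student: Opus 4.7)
The plan is to unfold both sides of the equation and verify that the resulting iposets coincide on the nose — this is an equality, not just a subsumption as in Lemma~\ref{P:lax-interchange}. Write $P:n\to m$ and $Q:m\to m'$; then both gluings are well-typed, since the target arity $1+m$ of $\idpos k 1 1\otimes P$ matches the source arity of $\idpos 1 \ell 1\otimes Q$, and $P\ipomconcat Q$ is itself defined on the right.

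First, I would compute the carriers. Denote by $\bullet$ the singleton point of $\idpos k 1 1$, which on the left is identified via the outer gluing with the singleton of $\idpos 1 \ell 1$ (both sitting at position~$1$ of the matching interface), and on the right is the singleton of $\idpos k \ell 1$. Both sides then have carrier $\{\bullet\}\sqcup(P\sqcup Q)_{/t_P\equiv s_Q}$. Next I would check that the source and target interface maps agree: using $s=(s_1\otimes s_2)\circ \varphi_{\cdot,\cdot}$ and the target-dual, position~$1$ of each composite interface routes to $\bullet$ when the relevant arity is nonzero, and the remaining positions route through $s_P$ or $t_Q$ in the same way on both sides.

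The main step is comparing the two orders. The right-hand side is transparent: $\bullet$ is isolated from $P\ipomconcat Q$, whose order is $\mathord{\le_P}\cup\mathord{\le_Q}$ together with arrows from $P\setminus t_P[m]$ to $Q\setminus s_Q[m]$. On the left, the outer $\ipomconcat$ introduces arrows from the non-target elements of $\idpos k 1 1\otimes P$ to the non-source elements of $\idpos 1 \ell 1\otimes Q$. The key observation driving the lemma is that $\bullet$ lies in the target interface of the left factor and in the source interface of the right factor, hence is excluded from both of these sets and receives no new arrows. What remains — arrows from $P\setminus t_P[m]$ to $Q\setminus s_Q[m]$, together with $\mathord{\le_P}$ and $\mathord{\le_Q}$ inherited from the two parallel compositions — is exactly the order on the right.

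The only mildly fiddly part is the bookkeeping for the $\varphi_{\cdot,\cdot}$-index shuffles: confirming that position~$1$ of each joint interface really corresponds to $\bullet$, and that the remaining interface positions align consistently across both sides and over the four cases of $k,\ell\in\{0,1\}$. This is routine. I would also remark that the reason we obtain equality here, rather than a subsumption as in Lemma~\ref{P:lax-interchange}, is that the ``interchange corners'' on the LHS are occupied by singletons which are absorbed into the glued interface, so the extra cross-arrows forbidden by the lax interchange simply cannot appear.
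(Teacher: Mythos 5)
Your proposal is correct and matches the paper's argument, which is only a one-line sketch of the same idea: the glued singleton $\idpos k 1 1\ipomconcat\idpos 1 \ell 1$ sits in the target interface of the left factor and the source interface of the right factor, so it is excluded from the cross-product of arrows added by the outer gluing and ends up isolated, exactly as on the right-hand side. Your version simply carries out the carrier, interface and order bookkeeping that the paper leaves implicit.
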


\begin{proof}[sketch]
  The interface between $\idpos k 1 1$ and $\idpos 1 \ell 1$ forces
  these iposets to be glued separately to the rest in the gluing
  composition
  $( \idpos k 1 1\otimes P)\ipomconcat( \idpos 1 \ell 1\otimes Q)$.
  \qed
\end{proof}




One the one hand, it follows that singleton iposets in $\mcal S$ do
not interfere with compositions.  On the other hand,
Lemma~\ref{le:interchange-eq} shows that decompositions need not be
unique. The next lemma shows a kind of converse: if an iposet can be
decomposed by $\ipomconcat$ and also by $\otimes$, then all but one of the
components must be in $\mcal S$.
Henceforth, let
$\mcal C_1= \left\{ P_1\otimes\dotsm\otimes P_n\bigmid P_1,\dotsc,
P_n\in\mcal S \right\}$ denote the set of
multisets-with-interfaces, that is, iposets with discrete order.

\begin{lemma}[Decomposition]
  \label{le:decomp}
  Let $P= P_1\otimes P_2= Q_1\ipomconcat Q_2$ such that $P_1\ne \id_0$,
  $P_2\ne \id_0$, and $Q_1\ne \idpos k n n$, $Q_2\ne \idpos n k n$ for
  any $k\le n$.  Then $P_1\in \mcal C_1$ or $P_2\in \mcal C_1$.
\end{lemma}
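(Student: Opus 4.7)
The plan is to exploit the interaction between the two decompositions: gluing adds arrows between $Q_1 \setminus t_1[m]$ and $Q_2 \setminus s_2[m]$, while parallel composition forbids any arrow between $P_1$ and $P_2$. Concretely, I would write the carrier of $P$ as the disjoint union $A \sqcup M \sqcup B$, where $A = Q_1 \setminus t_1[m]$, $M = t_1[m] = s_2[m]$, and $B = Q_2 \setminus s_2[m]$, and note that by definition of $\ipomconcat$ one has $a <_P b$ for every $a\in A$ and $b\in B$.

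Next I would unpack the hypotheses: saying $Q_1 \neq \idpos{k}{n}{n}$ for any $k \le n$ means exactly that the target interface of $Q_1$ does not exhaust its carrier, i.e. $A \neq \emptyset$; symmetrically $B \neq \emptyset$. Pick $a_0 \in A$ and $b_0 \in B$. In $P_1 \otimes P_2$ there are no comparisons between $P_1$ and $P_2$, so the arrow $a_0 <_P b_0$ forces $a_0$ and $b_0$ to lie in the same parallel component; without loss of generality, both in $P_1$. Now for any other $a \in A$, the arrow $a <_P b_0$ with $b_0 \in P_1$ forces $a \in P_1$; likewise every $b \in B$ must be in $P_1$. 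Hence $A \cup B \subseteq P_1$, and consequently $P_2 \subseteq M$. The symmetric sub-case yields $P_1 \subseteq M$.

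A quick check shows that $M$ is an antichain of $P$: its elements are pairwise incomparable in $Q_1$ (all maximal there) and in $Q_2$ (all minimal there), and the extra arrows introduced by $\ipomconcat$ connect only $A$ to $B$ and so avoid $M$. Therefore $P_2$ (respectively $P_1$) has discrete underlying order. The final step is to conclude that any such discrete iposet lies in $\mathcal{C}_1$: for each of its elements $x$, record whether $x$ is in the source interface and whether it is in the target interface, giving one of the four singletons in $\mcal S$; assembling these singletons with $\otimes$ in the order prescribed by the interface indexings of $P_2$ yields $P_2$ itself.

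The main obstacle is precisely this last step: in general the source and target interface indexings of a discrete iposet may disagree on how to linearly order the carrier, so one must verify that the indexings of $P_2$ inherited from the equation $P_1 \otimes P_2 = Q_1 \ipomconcat Q_2$ admit a common total ordering of the points of $P_2$ realising them as a parallel composition of singletons. Here the fact that the source of $P$ is the concatenation $s_{P_1}; s_{P_2}$ (and similarly for targets) and equals $s_{Q_1}$ (resp.\ $t_{Q_2}$), together with all elements of $P_2$ being isolated in $Q_1$ and in $Q_2$, is what lets one pick such an ordering and complete the decomposition.
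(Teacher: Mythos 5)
Your core argument is correct and takes a genuinely different route from the paper. The paper argues by contradiction: assuming $P_1,P_2\notin\mcal C_1$, it extracts a two-element chain from each component, observes that these four points form a $\twotwo$ straddling the parallel composition, and then chases where the four points can sit relative to $Q_1$ and $Q_2$ until it contradicts $P=P_1\otimes P_2$. You argue directly: the gluing puts all of $A\times B$ into the order, the absence of cross-arrows in $P_1\otimes P_2$ then forces $A\cup B$ into a single parallel component, and the other component is trapped inside the glued interface $M$, which is an antichain. Your version is arguably cleaner---it localises exactly where the hypotheses $A\ne\emptyset$ and $B\ne\emptyset$ enter and avoids the case analysis. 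Both proofs read the hypothesis ``$Q_1\ne\idpos k n n$ for any $k\le n$'' as ``$A\ne\emptyset$'' (and dually for $Q_2$), so you are in step with the paper on that point.

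The one real issue is the final step, which you rightly flag. The paper defines $\mcal C_1=\mcal S^\otimes$ but immediately glosses it as ``iposets with discrete order'', and its own proof silently uses this equivalence (it infers a two-chain in $P_i$ from $P_i\notin\mcal C_1$). Under that gloss your last step is immediate and your worry is moot. Under the literal reading $\mcal C_1=\mcal S^\otimes$, however, your worry is justified and your proposed fix does \emph{not} go through: the source and target indexings of the discrete component need not admit a common linear ordering. Concretely, let $Q_1$ be the discrete iposet on $\{a,u,v\}$ with $s_{Q_1}(1)=u$, $s_{Q_1}(2)=v$, $t_{Q_1}(1)=u$, $t_{Q_1}(2)=v$, and let $Q_2$ be the discrete iposet on $\{u',v',b\}$ with $s_{Q_2}(1)=u'$, $s_{Q_2}(2)=v'$, $t_{Q_2}(1)=v'$, $t_{Q_2}(2)=u'$. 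Then $Q_1\ipomconcat Q_2$ is the four-point iposet with the single relation $a<b$, source $1\mapsto u$, $2\mapsto v$ and target $1\mapsto v$, $2\mapsto u$; it equals $P_1\otimes P_2$ where $P_1$ is the interface-free chain $a<b$ and $P_2$ is the symmetry on $\{u,v\}$---which the paper itself later observes is not gluing-parallel, hence certainly not in $\mcal S^\otimes$. All hypotheses of the lemma hold, yet neither factor is a parallel product of singletons. So discreteness of one factor is the honest content of the lemma; the residual defect lies in the paper's identification of $\mcal C_1$ with discrete iposets, not in your argument, but you should state your conclusion as ``$P_1$ or $P_2$ has discrete order'' rather than pretend the indexings can always be reconciled.
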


\begin{proof}
  Suppose $P_1\notin \mcal C_1$
  and $P_2\notin \mcal C_1$.  Then $P$ contains a $\twotwo$: there are
  $w, x\in P_1$ and $y, z\in P_2$ for which $w<_P x$, $y<_P z$,
  $w\para_P y$, $w\para_P z$, $x\para_P y$, and $x\para_P z$.

  If $w,y\notin Q_2$, then $w,y\in Q_1\setminus t_{ Q_1}$.  As
  $Q_2\ne \idpos n k n$ for any $k\le n$, there must be an element
  $v\in Q_2\setminus s_{ Q_2}$.  But then $w\le_P v$ and $y\le_P v$,
  which yields arrows between $w\in P_1$ and $y\in P_2$
  that contradict $P=P_1 \otimes P_2$.  A dual argument rules out that
  $x,z\notin Q_1$.

  It follows that $w\in Q_2$ or $y\in Q_2$.  Assume, without loss of
  generality, that $w\in Q_2$. Then $x\in Q_2\setminus s_{Q_2}$
  because $w\le_{P_1} x$.  Now if also $y\in Q_2$, then by the same
  argument, $z\in Q_2\setminus s_{ Q_2}$.  Hence $Q_2$ contains two
  different points which are not in its starting interface; and as
  $Q_1\setminus t_{ Q_1}$ is non-empty, this again establishes a
  connection between $x\in P_1$ and $z\in P_2$ which cannot exist.
  Hence $y\notin Q_2$, but then $y\in Q_1\setminus t_{ Q_1}$, so that
  $y\le_P x$, which contradicts $x\para_P y$. \qed
\end{proof}

The next lemma generalises Levi's lemma for
words~\cite{journals/bcms/Levi44}.

\begin{lemma}[Levi property]
  \label{le:Levi}
  Let $P\mathop{\Box} Q= U\mathop{\Box} V$ for
  $\Box\in \{\ipomconcat,\otimes\}$.  Then there is an $R$ so that either
  $P= U\mathop{\Box }R$ and $R\mathop{\Box} Q= V$, or
    $U= P\mathop{\Box} R$ and $R\mathop{\Box} V= Q$.
\end{lemma}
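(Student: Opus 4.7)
The plan is to treat the two cases of $\Box$ separately, working inside the common iposet $X = P \Box Q = U \Box V$ and locating $R$ as an appropriate ``middle slice''.

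For $\Box = \ipomconcat$, the iposet $X$ carries two distinguished horizontal cuts: the antichains $I_{PQ} = t_P[m] = s_Q[m]$ and $I_{UV} = t_U[m'] = s_V[m']$. By definition of gluing, every point of $P \setminus I_{PQ}$ lies strictly below every point of $Q \setminus I_{PQ}$ in $X$, and likewise for $U$ and $V$. The core step is to show that one cut lies (weakly) below the other, say $I_{UV} \subseteq P$: if there were $x \in I_{UV} \cap (Q \setminus I_{PQ})$ and $y \in I_{PQ} \cap (V \setminus I_{UV})$, then the $P \ipomconcat Q$ decomposition would force $y \not< x$ (since $y \in I_{PQ}$ and $x \in Q \setminus I_{PQ}$ with $y$ maximal in $P$) while the $U \ipomconcat V$ decomposition would force $y < x$ via the added cross-arrows, a contradiction. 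Assuming wlog $I_{UV} \subseteq P$, I take $R$ to be the sub-iposet of $X$ with underlying set $P \cap V$, source interface $I_{UV}$, and target interface $I_{PQ}$. The identities $U \ipomconcat R = P$ and $R \ipomconcat V = Q$ then follow by comparing underlying sets, order relations, and interface injections on each side.

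For $\Box = \otimes$, the iposet $X$ is a parallel composition in two ways, and by definition its source (resp.\ target) interface lists the left operand's interface first and then the right's. Comparing the two source interface maps $[n_P + n_Q] \to X$ coming from the two decompositions, together with the no-arrow-across-components property of parallel composition, forces one of the intersections $P \cap V$ or $Q \cap U$ to be empty. In the first case, $P \subseteq U$ and $R = U \setminus P$ with the induced interface injections yields $U = P \otimes R$ and $Q = R \otimes V$; the other case is symmetric.

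The hard part will be the $\ipomconcat$ case: proving comparability of the two cuts, and then verifying that the proposed $R$ is a legitimate iposet, i.e., that its source injection lands in $R_{\min}$ and its target injection in $R_{\max}$. Both facts follow from the order structure inherited from $X$, but tracking the interface injections as indexed tuples (rather than mere subsets of points) is bookkeeping-heavy and requires careful use of the fact that interface morphisms are strictly order-preserving with respect to $<_\Nat$.
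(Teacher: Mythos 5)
Your reduction of the gluing case to ``one cut lies below the other'' is the right instinct, but the argument you give for it does not work, and the claim itself is false for a fixed identification of points. Take $x\in I_{UV}\cap(Q\setminus I_{PQ})$ and $y\in I_{PQ}\cap(V\setminus I_{UV})$ as in your proof. The point $y$ is a glued point of the $P\ipomconcat Q$ cut, so it lies in $t_P$ and is excluded from the cross-arrows $(P\setminus t_P)\times(Q\setminus s_Q)$; its maximality in $P$ (equivalently, minimality in $Q$) only says that nothing lies \emph{below} it in $Q$ --- it may still lie below $x$ via $\le_Q$, so ``$y\not<x$'' is not forced. Dually, $x$ is a glued point of the $U\ipomconcat V$ cut, hence in $t_U\cap s_V$, so it is neither a source nor a target of the cross-arrows $(U\setminus t_U)\times(V\setminus s_V)$, and $y\notin U$; so ``$y<x$'' is not forced either. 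Both decompositions are consistent with $x\para y$, and no contradiction arises.

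A concrete failure: let $X$ be the discrete two-point iposet $\{a,b\}\colon 0\to 0$, and write $X=P\ipomconcat Q$ with $P=\{a\}\colon 0\to 1$, $t_P(1)=a$, $Q=\{a,b\}\colon 1\to 0$, $s_Q(1)=a$, and $X=U\ipomconcat V$ with $U=\{b\}\colon 0\to 1$, $t_U(1)=b$, $V=\{a,b\}\colon 1\to 0$, $s_V(1)=b$. Then $I_{PQ}=\{a\}\not\subseteq U$ and $I_{UV}=\{b\}\not\subseteq P$, so neither cut lies below the other. The Levi conclusion still holds --- $R=\id_1$ gives $P\cong U\ipomconcat R$ and $R\ipomconcat Q\cong V$ --- but only after re-identifying $a$ with $b$ by an isomorphism of the composite: no middle slice of the fixed point set can work, since $U\ipomconcat R$ always contains the point $b\notin P$. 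The missing idea is that $P\ipomconcat Q=U\ipomconcat V$ is an equality up to isomorphism, and the interpolant exists only after that isomorphism (equivalently, an automorphism of $X$) is chosen suitably; your proof never makes or justifies this choice. The same issue already infects your $\otimes$ case: for $X$ three isolated points with $P=\{a,c\}$, $Q=\{b\}$, $U=\{a,b\}$, $V=\{c\}$ and all interfaces empty, both $P\cap V$ and $Q\cap U$ are nonempty. For what it is worth, the paper itself only writes out the $\otimes$ case (the same partition argument you use) and dismisses the gluing case as ``similar, but more tedious''; you are attempting exactly the tedious part, and the gap above is where the tedium lives.
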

\begin{proof}
  The proof for $\otimes$ is trivial: If $P\otimes Q=U\otimes V$,
  then this iposet is partitioned into three components according to
  $P\sqcup Q$ and $U\sqcup V$. If the decomposition of $U$ and $V$
  happens within $P$, then there is an $R$ such that $P=U\otimes R$
  and $R\otimes Q=V$. Otherwise, if it happens within $Q$, then there
  exists an $R$ such that $U=P\otimes R$ and $R\otimes V$. Finally, if
  $P=U$ and $Q=V$, there is nothing to show.  The proof for $\ipomconcat$ is
  similar, but more tedious due to  gluing. \qed
\end{proof}
It is instructive to find the two cases in the decomposition of \N\ in
Figure~\ref{fi:ndecomp}.

Levi's lemma is an interpolation property: every
$P\mathop{\Box} Q= U\mathop{\Box} V$ has a common
factorisation---either $U\mathop{\Box} R\mathop{\Box} Q$ or
$P\mathop{\Box} R\mathop{\Box} V$. Hence sequential and gluing
decompositions at top level are equal up-to associativity (and unit
laws). 



The three lemmas in this section are helpful for characterising the
iposets generated by $\ipomconcat$ and $\otimes$ from singletons. This is
the subject of the next section.


\section{Generating Iposets}
\label{se:generate}

Recall that $\mcal S$ is the set of \emph{singleton} iposets.  It
contains the four iposets $\idpos 0 0 1$, $\idpos 0 1 1$,
$\idpos 1 0 1$ and $\idpos 1 1 1$, that is,
\begin{equation*}
  [ 0]\to[ 1]\from [ 0]\,, \qquad [ 0]\to[ 1]\from[ 1]\,,
  \qquad [ 1]\to[ 1]\from [ 0]\,, \qquad [ 1]\to[ 1]\from[ 1]\,,
\end{equation*}
with mappings uniquely determined.  We are interested in the sets of
iposets generated from singletons using $\ipomconcat$ and $\otimes$.
%
Note that strictly speaking, $\idpos 0 0 1$ should not count as a
generator, because by Lemma~\ref{le:interchange-eq} it is equal to
$\idpos 0 1 1\ipomconcat \idpos 1 0 1$.

\begin{definition}
  The set of \emph{gluing-parallel} iposets (\emph{gp-iposets}) is the
  smallest set that contains the empty iposet $\id_0$ and the
  singleton iposets in
  $\mcal S$ and is closed under gluing and parallel composition.
\end{definition}

\begin{theorem}\label{P:ipos-free}
  The gp-iposets are generated freely by
  $\mcal S$ in the variety of algebras satisfying the equations of
  Propositions~\ref{pr:ipos-comp} and~\ref{pr:ops-assoc} and
  Lemma~\ref{le:interchange-eq}.
\end{theorem}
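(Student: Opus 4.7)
The plan is to prove freeness by establishing that any two terms over $\mcal S$ that denote isomorphic gp-iposets are provably equal in the equational theory generated by Propositions~\ref{pr:ipos-comp}, \ref{pr:ops-assoc} and Lemma~\ref{le:interchange-eq}. Soundness is immediate from those three results. For completeness, I would induct on the number of points of the underlying iposet, with case analysis on the outermost operations of the two representing terms $t$ and $t'$.

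For the base case, where the denotation is a singleton iposet, the only non-trivial identification to perform is between $\idpos 0 0 1$ and $\idpos 0 1 1 \ipomconcat \idpos 1 0 1$, which is the instance of Lemma~\ref{le:interchange-eq} with $P=Q=\id_0$ together with the unit laws for $\otimes$; combined with $\idpos 1 1 1 = \id_1$, these let any term whose denotation is a singleton be equationally collapsed to an element of $\mcal S$.

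For the inductive step, write $t = u \mathbin{\Box_1} v$ and $t' = u' \mathbin{\Box_2} v'$. If $\Box_1 = \Box_2$, Lemma~\ref{le:Levi} (Levi) supplies an iposet $R$ giving a common refinement of the two decompositions; picking any term $r$ denoting $R$, the induction hypothesis applied to the strictly smaller subterms yields equational derivations such as $u \equiv u' \mathbin{\Box_1} r$ and $r \mathbin{\Box_1} v \equiv v'$, and associativity from Propositions~\ref{pr:ipos-comp} and~\ref{pr:ops-assoc} closes the case. If $\Box_1 \neq \Box_2$, say $t = u \ipomconcat v$ while $t' = u' \otimes v'$, then Lemma~\ref{le:decomp} (Decomposition), applied after absorbing any identity factors of the excluded forms $\idpos k n n$ or $\idpos n k n$ via unit laws, forces one of $u', v'$ to lie in $\mcal C_1$. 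Iterating Lemma~\ref{le:interchange-eq} across the singletons of that parallel multiset then rewrites $t'$ as a gluing composition, reducing the mixed case to the same-operation case already handled.

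The main obstacle is the mixed case. Lemma~\ref{le:interchange-eq} is restricted in shape: one parallel factor must be one of the singleton identities $\idpos k 1 1$ or $\idpos 1 \ell 1$ meeting at the shared interface, so an arbitrary $\mcal C_1$-factor typically must be cut into such shapes using the base-case identifications and unit laws before interchange applies. The cleanest organisation is to define a canonical form, namely a maximal $\ipomconcat$-chain of $\otimes$-tensored factors with singletons pushed to the boundary, to show via the rewriting described above that every term reduces to canonical form using only the given equations, and then to verify uniqueness of the canonical form on a given iposet by combining Lemmas~\ref{le:decomp} and~\ref{le:Levi}. This separates the interface bookkeeping from the induction and makes the singleton-threading local to one lemma.
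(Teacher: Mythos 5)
Your proposal is correct and follows essentially the same route as the paper's proof: induction on the number of points, Levi's lemma (Lemma~\ref{le:Levi}) for ambiguous decompositions with the same outer operation, the Decomposition and Interchange lemmas (Lemmas~\ref{le:decomp} and~\ref{le:interchange-eq}) for mixed $\ipomconcat$/$\otimes$ decompositions, and the base-case identification $\idpos 0 0 1 = \idpos 0 1 1 \ipomconcat \idpos 1 0 1$. The only differences are presentational---you phrase freeness as completeness of the equational theory on terms rather than as unique homomorphic extension of a map $\mcal S\to A$---and your observation that Lemma~\ref{le:interchange-eq} must be iterated singleton-by-singleton across the $\mcal C_1$-factor supplied by Lemma~\ref{le:decomp} correctly pins down a detail that the paper's sketch glosses over.
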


\begin{proof}[sketch]
  Suppose $(A,\ipomconcat,\otimes,(1_i)_{i\ge 0})$ is any algebra satisfying
  the equations of Propositions~\ref{pr:ipos-comp}
  and~\ref{pr:ops-assoc} and Lemma~\ref{le:interchange-eq} and let
  $\varphi:\mcal S\to A$ be any function.  We need to show that $\varphi$
  extends to a unique iposet morphism $\hat{\varphi}$.

  We can generate any $\id_n$ as a parallel composition of $\id_1$. We
  map $\hat{\varphi}(\id_i)\mapsto 1_i$ for any $i\ge 0$, and we map
  any other singleton
  $p\in \mcal S$ as $\hat{\varphi}(p) = \varphi(p)$.  For complex
  iposets we proceed by induction on the number of elements, assuming
  that homomorphism laws hold for iposets with $n$ elements. 

  If the top composition of the size $n+1$ iposet is $\ipomconcat$, then we
  use Levi's lemma to factorise with respect to $\ipomconcat$ and use
  associativity of $\ipomconcat$ to establish the homomorphism property of
  $\hat{\varphi}$. For $\otimes$ we proceed likewise. Finally, if the
  top composition is ambiguous, then the decomposition lemma forces
  the configuration in which the interchange lemma can be applied,
  yielding a parallel composition of the same size.  Finally, this
  extension is unique, as it was forced by the construction. \qed


\end{proof}

Now
we %
define hierarchies of iposets generated from $\mcal S$.
(If $\idpos 0 0 1$ were removed from $\mcal S$, the hierarchy would be
different
only for less than two alternations of $\ipomconcat$ and $\otimes$.%
)

For any $\mcal Q\subseteq \iPos$ and
$\mathop{\Box}\in\{\otimes,\ipomconcat\}$, let
\begin{align*}
  \mcal Q^{\mathop{\Box}} = \{ P_1\mathop{\Box}\dotsm\mathop{\Box}  P_n\mid n\in \Nat,
  P_1,\dotsc, P_n\in \mcal Q\}\ .
\end{align*}
Then define $\mcal C_0= \mcal D_0= \mcal S$ and, for all $n\in
\Nat$,
\begin{equation*}
  \mcal C_{ 2 n+ 1}= \mcal C_{ 2 n}^\otimes\,, \qquad \mcal D_{ 2 n+
    1}= \mcal D_{ 2 n}^\ipomconcat\,, \qquad \mcal C_{ 2 n+ 2}= \mcal C_{ 2
    n+ 1}^\ipomconcat\,, \qquad \mcal D_{ 2 n+ 2}= \mcal D_{ 2 n+ 1}^\otimes
\end{equation*}
(this agrees with the $\mcal C_1$ notation used earlier).  Finally,
let
\begin{equation*}
  \bar{ \mcal S}\defeq \bigcup_{ n\ge 0} \mcal C_n= \bigcup_{ n\ge 0}
  \mcal D_n
\end{equation*}
be the set of all iposets generated from $\mcal S$ by application of
$\otimes$ and $\ipomconcat$.

\begin{lemma}
  \label{le:trivial_hierarchy_inclusions}
  For all $n\in \Nat$,
  $\mcal C_n\cup \mcal D_n\subseteq \mcal C_{ n+ 1}\cap \mcal D_{ n+
    1}$.
\end{lemma}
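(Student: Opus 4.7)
The plan is to split the desired inclusion into four parts and handle them separately: $\mcal C_n \subseteq \mcal C_{n+1}$, $\mcal D_n \subseteq \mcal D_{n+1}$ (the ``trivial'' directions), and $\mcal C_n \subseteq \mcal D_{n+1}$, $\mcal D_n \subseteq \mcal C_{n+1}$ (the ``crossing'' directions). The two trivial directions follow immediately from the fact that, for any operation $\Box\in\{\otimes,\ipomconcat\}$ and any family $\mcal Q$, we have $\mcal Q\subseteq \mcal Q^\Box$, since every $P\in \mcal Q$ is a length-one $\Box$-composition of itself. Applying this with $\mcal Q = \mcal C_n$ and $\Box$ the operation used at the $(n+1)$-th level yields $\mcal C_n\subseteq \mcal C_{n+1}$, and likewise for $\mcal D$.

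For the crossing directions I would proceed by induction on $n$. For the base case $n=0$, both $\mcal C_0$ and $\mcal D_0$ equal $\mcal S$, and $\mcal S\subseteq \mcal S^\otimes= \mcal C_1$ and $\mcal S\subseteq \mcal S^\ipomconcat= \mcal D_1$ again by the length-one observation.

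For the inductive step, suppose $\mcal C_n\subseteq \mcal D_{n+1}$ and $\mcal D_n\subseteq \mcal C_{n+1}$; we want to show $\mcal C_{n+1}\subseteq \mcal D_{n+2}$ and $\mcal D_{n+1}\subseteq \mcal C_{n+2}$. Consider the case where $n$ is even; the odd case is symmetric. Then $\mcal C_{n+1}= \mcal C_n^\otimes$ and $\mcal D_{n+2}= \mcal D_{n+1}^\otimes$. Any $P\in \mcal C_{n+1}$ has the form $P= P_1\otimes\dotsm\otimes P_m$ with each $P_i\in \mcal C_n$; by the induction hypothesis each $P_i\in \mcal D_{n+1}$, and hence $P\in \mcal D_{n+1}^\otimes= \mcal D_{n+2}$. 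Dually, $\mcal D_{n+1}= \mcal D_n^\ipomconcat$ and $\mcal C_{n+2}= \mcal C_{n+1}^\ipomconcat$, so any $P= P_1\ipomconcat\dotsm\ipomconcat P_m\in \mcal D_{n+1}$ with $P_i\in \mcal D_n\subseteq \mcal C_{n+1}$ lies in $\mcal C_{n+1}^\ipomconcat= \mcal C_{n+2}$.

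There is no real obstacle here: the whole argument is driven by the alternation of operations in the definitions of $\mcal C_n$ and $\mcal D_n$, together with the self-inclusion $\mcal Q\subseteq \mcal Q^\Box$. The only thing to be careful about is keeping track of the parity of $n$ so that the ``next'' operation applied to $\mcal C_{n+1}$ indeed matches the one used in defining $\mcal D_{n+2}$, and symmetrically; writing out the even and odd cases explicitly makes this transparent.
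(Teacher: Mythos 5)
Your proof is correct and follows essentially the same route as the paper's: the self-inclusions $\mcal Q\subseteq\mcal Q^\Box$ give the trivial directions, and the crossing inclusions are obtained by propagating the base case $\mcal S\subseteq\mcal C_1\cap\mcal D_1$ up the hierarchy using the fact that the alternation of $\otimes$ and $\ipomconcat$ in $\mcal C_n$ from $\mcal C_0$ matches that of $\mcal D_{n+1}$ from $\mcal D_1$ (and vice versa). Your explicit induction with the parity case split is just a more detailed write-up of the paper's one-line monotonicity argument.
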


\begin{proof}
  We need to check the inclusions
  $\mcal C_n\subseteq \mcal C_{ n+ 1}$,
  $\mcal D_n\subseteq \mcal D_{ n+ 1}$,
  $\mcal C_n\subseteq \mcal D_{ n+ 1}$ and
  $\mcal D_0\subseteq \mcal C_1$. The first two are trivial by
  construction, plus $\mcal C_n\subseteq \mcal D_{ n+ 1}$ and
  $\mcal D_n\subseteq \mcal C_{ n+ 1}$.  For the third one, note that
  $\mcal C_0\subseteq \mcal C_{0}^{\ipomconcat}= \mcal S^\ipomconcat= \mcal
  D_{0}^{\ipomconcat} = \mcal D_{1}$.
  Since $\mcal C_{n}$ is constructed from $\mcal C_{0}$ by the same
  alternations of $\otimes$ and $\ipomconcat$ as $\mcal D_{n+1}$ is
  constructed from $\mcal D_{1}$, the inclusion holds. The proof of
  the fourth inclusion is similar. \qed
\end{proof}

\begin{theorem}
  \label{th:iorder}
  An iposet is in $\mcal C_2$ iff it is an interval order.
\end{theorem}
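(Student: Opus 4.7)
The plan is to prove both directions of the biconditional separately.

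For $(\Rightarrow)$, I would take $P \in \mcal C_2$ and expand it as $P = D_1 \ipomconcat \cdots \ipomconcat D_N$ with each $D_i \in \mcal C_1$ discrete. Since the gluing definition identifies the target elements of $D_i$ with the source elements of $D_{i+1}$, each element of $P$ arises from a unique contiguous range $[\alpha(x),\omega(x)] \subseteq \{1,\dots,N\}$ of factor indices. A short induction on $N$, using the arrow clause $(P_1 \setminus t_1[m]) \times (P_2 \setminus s_2[m])$ of $\ipomconcat$, yields $x <_P y$ iff $\omega(x) < \alpha(y)$. Hence $x \mapsto [\alpha(x),\omega(x)]$ is an interval representation of $P$ into the linear order $\{1,\dots,N\}$, so $P$ is an interval order. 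Equivalently, an induced $\twotwo$ with $w<x$, $y<z$ and all other pairs unrelated would force $\omega(w)<\alpha(x)$ and $\omega(y)<\alpha(z)$; by symmetry assume $\omega(w) \le \omega(y)$, whence $\omega(w) < \alpha(z)$ gives $w<z$, contradicting the $\twotwo$.

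For $(\Leftarrow)$, let $(s,P,t)\colon n \to m$ be an iposet whose underlying poset is an interval order. I would start from any interval representation $(b,e)$ of $P$ and shift the starts $b(x)$ of source elements $x \in s([n])$ to a common value $t_\bot$ strictly below all other endpoints, and the ends $e(x)$ of target elements $x \in t([m])$ to a common value $t_\top$ strictly above all other endpoints; Fishburn's equivalence is preserved because source elements are $P$-minimal and target elements are $P$-maximal, so every case in which the changed endpoints would affect the equivalence is vacuously correct on both sides. List the distinct endpoints as $t_\bot = t_0 < t_1 < \cdots < t_K = t_\top$ and set $L_i = \{x \in P : b(x) \le t_i < e(x)\}$ for $0 \le i < K$. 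Each $L_i$ is an antichain because its members pairwise overlap at $t_i$, so I promote $L_i$ to a discrete iposet $\tilde L_i \in \mcal C_1$ whose source interface equals $L_0$ (enumerated by $s$) when $i=0$, whose target interface equals $L_{K-1}$ (enumerated by $t$) when $i = K-1$, and which otherwise has source interface $L_{i-1} \cap L_i$ and target interface $L_i \cap L_{i+1}$, with all intermediate interfaces enumerated consistently via a fixed linear order on $P$. Then $\tilde L_0 \ipomconcat \cdots \ipomconcat \tilde L_{K-1} \in \mcal C_2$, and re-running the forward-direction formula shows that its underlying poset and its interfaces coincide with $(s,P,t)$.

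The hardest part will be the backward direction, specifically arranging the interval representation so that the prescribed source and target interfaces appear as the first and last active layers $L_0$ and $L_{K-1}$, and choosing coherent enumerations of the intermediate interface intersections so that the resulting gluing reproduces the iposet $(s,P,t)$ precisely, not merely its underlying poset. The forward formula $x<_P y \iff \omega(x) < \alpha(y)$ is by contrast a routine induction from the definition of $\ipomconcat$, and it simultaneously drives the order-theoretic verification in the converse.
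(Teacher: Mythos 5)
Your proof is correct and takes essentially the same route as the paper's: gluing preserves interval orders (the paper argues this both via $\twotwo$-freeness and by concatenating interval sequences, which is exactly your $x<_P y\Leftrightarrow\omega(x)<\alpha(y)$ formula), and conversely an interval order is cut into discrete antichain layers glued along their overlaps---the paper slices by the $\sqsubset$-ordered maximal antichains where you slice at the endpoints of an interval representation, which yields the same decomposition up to redundant factors. The only cosmetic point is that $x\mapsto[\alpha(x),\omega(x)]$ violates Fishburn's requirement $b(x)<_Q e(x)$ when $\alpha(x)=\omega(x)$ (fixable by setting $e(x)=\omega(x)+1/2$), a technicality your alternative $\twotwo$-based argument already sidesteps.
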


\begin{proof}
  Suppose $P\ipomconcat Q\in \mcal C_2$. First it is clear that all elements
  of $\mcal C_1$ are interval orders, so we will be done once we can
  show that the gluing composition of two interval orders is an
  interval orders.  This is precisely the proof of
  Lemma~\ref{le:decomp}: if $P\ipomconcat Q$ contains a $\twotwo$, then so
  do $P$ or $Q$. Yet we also give a direct construction: Let
  $\sigma_P$ be the interval sequence for interval representation
  $(b_P,e_P)$ of $P:n\to m$ and $\sigma_Q$ the interval sequence for
  interval representation $(b_Q,e_Q)$ of $Q:m\to k$. Then concatenate
  $\sigma_P$ and $\sigma_Q$, rename $b_P$, $b_Q$ as $b$ and $e_P$,
  $e_Q$ as $e$, delete $e(t_P(i))$, $b(s_Q(i))$ and replace
  $e(t_Q(i))$ with $e(t_P(i))$ for each $i\in [m]$. This yields the
  interval sequence for interval representation $(b,e)$ of $P\ipomconcat Q$
  and $P\ipomconcat Q$ is therefore an interval order.
%
%
  Figure~\ref{fi:intcomp} gives an example.

  \begin{figure}[tbp]
    \centering
    \begin{tikzpicture}[label distance=-.2cm,shorten <=-3pt, shorten >=-3pt]
      \begin{scope}[->, >=latex', xscale=.4, yscale=.75]
        \begin{scope}
          \node (1) at (1.5,0) {$a$};
          \node [label=right:{\tiny 1}] (2) at (4.5,0) {$b$};
          \node (3) at (1.5,-1) {$c$};
          \node (4) at (4.5,-1) {$d$};
          \node [label=right:{\tiny 2}] (5) at (4.5,-2) {$e$};
          \foreach \i/\j in {1/2,3/2,3/4,3/5} \path (\i) edge (\j);
          \node at (6.5,-1) {$\ipomconcat$};
        \end{scope}
        \begin{scope}[xshift=7.5cm]
          \node [label=left:{\tiny 1}]  (6) at (.5,0) {$f$};
          \node (7) at (3.5,0) {$g$};
          \node [label=left:{\tiny 2}] (8) at (.5,-2) {$h$};
          \node (9) at (3.5,-2) {$i$};
          \foreach \i/\j in {6/7,8/7,8/9} \path (\i) edge (\j);
          \node at  (5.7,-1) {$=$};
        \end{scope}
        \begin{scope}[xshift=16cm]
          \node (1) at (-.5,0) {$a$};
          \node (2) at (4.5,0) {$bf$};
          \node (3) at (-.5,-1) {$c$};
          \node (4) at (2.5,-1) {$d$};
          \node (5) at (4.5,-2) {$eh$};
        \end{scope}
        \begin{scope}[xshift=20cm]
          \node (6) at (.5,0) {\phantom{$bf$}};
          \node (7) at (5.5,0) {$g$};
          \node (8) at (.5,-2) {\phantom{$eh$}};
          \node (9) at (5.5,-2) {$i$};
        \end{scope}
        \foreach \i/\j in {1/2,3/2,3/4,3/5} \path (\i) edge (\j);
        \foreach \i/\j in {6/7,8/7,8/9} \path (\i) edge (\j);
        \foreach \i/\j in {1/9,4/7,4/9} \path (\i) edge (\j);
      \end{scope}
      \begin{scope}[-, xscale=.4, yshift=-2.5cm, shorten <=-4.5pt, shorten >=-4.5pt]]
        \begin{scope}
          \node (1l) at (0,0) {{\tiny $|$}};
          \node (1r) at (3,0) {{\tiny $|$}};
          \node (2l) at (4,0) {{\tiny $|$}};
          \node (2r) at (5,0) {{\tiny $|$}};
          \node (3l) at (0,-1) {{\tiny $|$}};
          \node (3r) at (1,-1) {{\tiny $|$}};
          \node (4l) at (2,-1) {{\tiny $|$}};
          \node (4r) at (5,-1) {{\tiny $|$}};
          \node (5l) at (2,-2) {{\tiny $|$}};
          \node (5r) at (5,-2) {{\tiny $|$}};
          \path (1l) edge node[above] {\small $I(a)$}  (1r);
          \path (2l) edge node[above] {\small $I(b)$}  (2r);
          \path (3l) edge node[above] {\small $I(c)$}  (3r);
          \path (4l) edge node[above] {\small $I(d)$}  (4r);
          \path (5l) edge node[above] {\small $I(e)$}  (5r);
          \node at (6,-1) {$\ipomconcat$};
        \end{scope}
        \begin{scope}[xshift=7cm]
          \node (6l) at (0,0) {{\tiny $|$}};
          \node (6r) at (3,0) {{\tiny $|$}};
          \node (7l) at (4,0) {{\tiny $|$}};
          \node (7r) at (5,0) {{\tiny $|$}};
          \node (8l) at (0,-2) {{\tiny $|$}};
          \node (8r) at (1,-2) {{\tiny $|$}};
          \node (9l) at (2,-2) {{\tiny $|$}};
          \node (9r) at (5,-2) {{\tiny $|$}};
          \path (6l) edge node[above] {\small $I(f)$}  (6r);
          \path (7l) edge node[above] {\small $I(g)$}  (7r);
          \path (8l) edge node[above] {\small $I(h)$}  (8r);
          \path (9l) edge node[above] {\small $I(i)$}  (9r);
          \node at  (6,-1) {$=$};
        \end{scope}
        \begin{scope}[xshift=14cm]
          \node (1l') at (0,0) {{\tiny $|$}};
          \node (1r') at (3,0) {{\tiny $|$}};
          \node (2/6l') at (4,0) {{\tiny $|$}};
          \node (3l') at (0,-1) {{\tiny $|$}};
          \node (3r') at (1,-1) {{\tiny $|$}};
          \node (4l') at (2,-1) {{\tiny $|$}};
          \node (4r') at (5,-1) {{\tiny $|$}};
          \node (5/8l') at (2,-2) {{\tiny $|$}};
        \end{scope}
        \begin{scope}[xshift=21cm]
          \node (2/6r') at (3,0) {{\tiny $|$}};
          \node (7l') at (4,0) {{\tiny $|$}};
          \node (7r') at (5,0) {{\tiny $|$}};
          \node (5/8r') at (1,-2) {{\tiny $|$}};
          \node (9l') at (2,-2) {{\tiny $|$}};
          \node (9r') at (5,-2) {{\tiny $|$}};
        \end{scope}
           \path (1l') edge node[above] {\small $I(a)$}  (1r');
\path (2/6l') edge node[above] {\small $I(bf)$}  (2/6r');
\path (3l') edge node[above] {\small $I(c)$}  (3r');
\path (4l') edge node[above] {\small $I(d)$}  (4r');
\path (5/8l') edge node[above] {\small $I(eh)$}  (5/8r');
\path (7l') edge node[above] {\small $I(g)$}  (7r');
\path (9l') edge node[above] {\small $I(i)$}  (9r');
      \end{scope}
    \end{tikzpicture}
    \caption{Two interval orders and their concatenation: above as
      iposets, below using their interval representations.  (Labels
      added for convenience.)}
    \label{fi:intcomp}
  \end{figure}
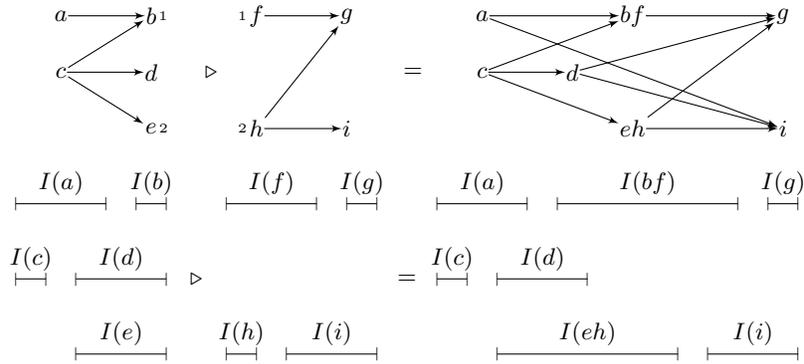
  
  For the backward direction, let $P$ be an interval order and $A_P$
  its set of maximal antichains.  Then $A_P$ is totally ordered by the
  relation $\sqsubset$ defined in Section~\ref{S:interval-orders}. 
  Now write $A_P=\{ P_1,\dotsc, P_k\}$ such that $P_i\sqsubset  P_j$ for
  $i< j$. Then each $P_i$ is an element of $\mcal S^\otimes$.  Write
  $s_1:[ n_1]\to P\from[ n_{ k+ 1}]: t_k$ for the sources and targets
  of $P$.

  For $i= 2,\dotsc, k$, let $[ n_i]= P_{ i- 1}\cap P_i$ be the overlap
  and $s_i:[ n_i]\hookrightarrow P_i$,
  $t_{ i- 1}:[ n_i]\hookrightarrow P_{ i- 1}$ the inclusions.
  Together with $s_1$ and $t_k$ this defines iposets
  $s_i:[ n_i]\to P_i\from[ n_{ i+ 1}]: t_i$.  (Note that
  $s_1:[ n_1]\to P_1$ because $P_1$ is the minimal element in $A_P$;
  similarly for $t_k:[ n_{ k+ 1}]\to P_k$.)  It is clear that
  $P= P_1\ipomconcat\dotsm\ipomconcat P_k$; see
  also~\cite[Prop.~2]{DBLP:conf/apn/Janicki18}. \qed
\end{proof}

In order to compare with series-parallel posets, we construct a
similar hierarchy for these.  Let
$\mcal T_0= \mcal U_0= \mcal S_0=\{ \idpos 0 0 1\}$ and, for all
$n\in \Nat$,
\begin{equation*}
  \mcal T_{ 2 n+ 1}= \mcal T_{ 2 n}^\otimes\,, \qquad \mcal U_{ 2 n+ 1}=
  \mcal U_{ 2 n}^\ipomconcat\,, \qquad \mcal T_{ 2 n+ 2}= \mcal T_{ 2 n+
    1}^\ipomconcat\,, \qquad \mcal U_{ 2 n+ 2}= \mcal U_{ 2 n+ 1}^\otimes\,.
\end{equation*}
Then, noting that any element of any $\mcal T_n$ or $\mcal U_n$ has
empty interfaces and that for iposets with empty interfaces, $\ipomconcat$
is serial composition, we see that
\begin{equation*}
  \bar{ \mcal S}_0\defeq \bigcup_{ n\ge 0} \mcal T_n= \bigcup_{ n\ge
    0} \mcal U_n
\end{equation*}
is the set of series-parallel posets.  Note that
$\mcal T_n\subseteq \mcal C_n$ and $\mcal U_n\subseteq \mcal D_n$ for
all $n$, hence also $\bar{ \mcal S}_0\subseteq \bar{ \mcal S}$.  Now
$\bar{ \mcal S}_0$ contains precisely the \N-free posets whereas \N\
is an interval order. Hence $\N\in \mcal C_2$, implying the next
lemma.  On the other hand, we will see below that
$\bar{ \mcal S}_0\not\subseteq \mcal C_n$ for any $n$.

\begin{lemma}
  $\mcal C_2\not\subseteq \bar{ \mcal S}_0$.
\end{lemma}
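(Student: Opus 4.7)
The plan is to reduce the non-inclusion to a straightforward comparison of forbidden induced subposets. First I would invoke Theorem~\ref{th:iorder} to identify $\mcal C_2$ with the class of (finite) interval orders, and recall from Section~\ref{se:posets} that, by the Valdes--Tarjan--Lawler / Grabowski characterisation, $\bar{\mcal S}_0$ consists precisely of the $\N$-free posets. The lemma then reduces to exhibiting an interval order that contains $\N$ as an induced subposet.

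The obvious witness is $\N$ itself, viewed as an iposet with empty interfaces. That $\N\notin \bar{\mcal S}_0$ is immediate (it contains itself as an induced subposet), so the only content is to certify that $\N\in \mcal C_2$. I would do this via the $\twotwo$-free characterisation of interval orders: label the four points $b_1, b_2, a_1, a_2$ with covering relations $b_1<b_2$, $a_1<a_2$ and the diagonal $a_1<b_2$; any induced $\twotwo$ would have to use all four points and split them into two disjoint $2$-chains, and a short enumeration of the possible splits shows that each is obstructed by the diagonal $a_1<b_2$. For a fully concrete alternative I could also exhibit an interval representation, e.g.\ $a_1\mapsto[0,2]$, $b_1\mapsto[1,3]$, $a_2\mapsto[2.5,6]$, $b_2\mapsto[4,5]$, which realises exactly the comparabilities of $\N$. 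Yet a third option, already foreshadowed by Figure~\ref{fi:ndecomp} and by the construction in the backward direction of Theorem~\ref{th:iorder}, is to write $\N$ as the threefold gluing of its maximal antichains $\{b_1,a_1\}$, $\{b_1,a_2\}$, $\{b_2,a_2\}$, each lying in $\mcal S^\otimes = \mcal C_1$, which places $\N$ in $\mcal C_1^\ipomconcat = \mcal C_2$ directly.

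The lemma then follows from $\N\in \mcal C_2\setminus \bar{\mcal S}_0$. There is essentially no obstacle here; the argument is already indicated by the paragraph preceding the statement, and the only microscopic task is certifying that $\N$ itself is an interval order, which any one of the three routes above dispatches in a line.
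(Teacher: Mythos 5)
Your proposal is correct and follows exactly the route the paper takes: the lemma is justified by the preceding paragraph, which observes that $\bar{\mcal S}_0$ consists precisely of the $\N$-free posets while $\N$ is an interval order and hence lies in $\mcal C_2$ by Theorem~\ref{th:iorder}. Your extra certifications that $\N$ is $\twotwo$-free (equivalently, admits an interval representation, or decomposes along its three maximal antichains) merely spell out details the paper leaves implicit, and all three check out.
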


\begin{lemma}
  $\mcal C_1\cup \mcal D_1 \subsetneq \mcal C_{2} \cap \mcal D_{2}$,
  \ie~there is an iposet with two non-trivial different
  decompositions.
\end{lemma}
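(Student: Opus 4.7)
The plan is to find a single witness iposet $R$ lying in $(\mcal C_2 \cap \mcal D_2) \setminus (\mcal C_1 \cup \mcal D_1)$, since Lemma~\ref{le:trivial_hierarchy_inclusions} already supplies the inclusion in the opposite direction. Such an $R$ will automatically yield two essentially different non-trivial decompositions---one with outermost $\ipomconcat$ and one with outermost $\otimes$---as required by the second clause of the statement.

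My candidate is the iposet $R\colon 0\to 0$ whose underlying poset has three points, a single covering arrow $a<b$, and an isolated point $c$. I would exhibit $R\in \mcal C_2$ via the gluing
\[
R \cong (\idpos 0 1 1 \otimes \idpos 0 0 1) \ipomconcat (\idpos 1 0 1 \otimes \idpos 0 0 1),
\]
in which each factor is a parallel composition of singletons and hence in $\mcal C_1$, and $R\in \mcal D_2$ via
\[
R \cong (\idpos 0 0 1 \ipomconcat \idpos 0 0 1) \otimes \idpos 0 0 1,
\]
whose first factor is the $2$-chain in $\mcal D_1$ and whose second is a bare singleton. Verifying each of these two equalities reduces to unfolding the definitions of $\ipomconcat$ and $\otimes$.

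To finish, I would show $R\notin \mcal C_1$ and $R\notin \mcal D_1$. The former is immediate: elements of $\mcal C_1=\mcal S^\otimes$ are parallel compositions of singletons, hence antichains, but $R$ contains the arrow $a<b$. For the latter I would carry out a short induction on the number of gluings, showing that every element of $\mcal D_1=\mcal S^\ipomconcat$ has linear underlying order---at each step, the attached singleton is either identified with the current maximum via a matching size-$1$ interface, or adjoined above it by the added $(P_1\setminus t_1[m])\times(P_2\setminus s_2[m])$ arrows. Since $a$ and $c$ are incomparable in $R$, this rules out $R\in \mcal D_1$.

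The main technical obstacle I anticipate is checking the first decomposition truly produces $R$. Were the trivial singleton $\idpos 0 0 1$ on either side placed inside the interface rather than alongside it, the added $(P_1\setminus t_1[m])\times(P_2\setminus s_2[m])$ arrows would introduce a spurious bipartite connection and yield a different iposet. Placing both $\idpos 0 0 1$ factors off the interface is precisely what allows the two interface points to merge into the isolated vertex $c$ while only the remaining two non-interface points become comparable, giving the single arrow $a<b$.
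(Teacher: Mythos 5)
Your proposal is correct and is essentially the paper's argument: the paper proves this lemma by citing the interchange law (Lemma~\ref{le:interchange-eq}), and your witness $R$ together with its two decompositions is exactly the instance $(\idpos 0 1 1\otimes \idpos 0 0 1)\ipomconcat(\idpos 1 0 1\otimes \idpos 0 0 1)= \idpos 0 0 1\otimes(\idpos 0 0 1\ipomconcat \idpos 0 0 1)$ of that law with $k=\ell=0$ and $P=Q=\idpos 0 0 1$. You merely verify this instance by hand and make explicit the (correct) observations that $R$ is neither an antichain nor linear, which the paper leaves implicit.
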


\begin{proof}
  Directly from Lemma~\ref{le:interchange-eq}.
\qed
\end{proof}



Next we show that the $\mcal C_n$ hierarchy is infinite, by exposing a
sequence of witnesses for $\mcal C_{ 2 n- 1}\subsetneq \mcal C_{ 2 n}$
for all $n\ge 1$.

Let $Q= \idpos 0 0 1$, $P_1= Q\ipomconcat Q$, and for $n\ge 1$,
$P_{ n+ 1}= Q\ipomconcat( P_n\otimes P_n)$.  Note that all these are
series-parallel posets.  Graphically:

\vspace*{-5ex}
\begin{gather*}
  P_1= \pomset{ \cdot \ar[r] & \cdot} \qquad
  P_2= \pomset{ & \cdot \ar[r] & \cdot \\ \cdot \ar[ur] \ar[dr] \\ &
    \cdot \ar[r] & \cdot} \qquad
  P_3= \pomset{ & & \cdot \ar[r] & \cdot \\ & \cdot \ar[ur] \ar[dr] \\ & &
    \cdot \ar[r] & \cdot \\ \cdot \ar[uur] \ar[ddr] \\ & & \cdot
    \ar[r] & \cdot \\ & \cdot \ar[ur] \ar[dr] \\ & &
    \cdot \ar[r] & \cdot} \quad \dotsc
\end{gather*}
\vspace*{-6ex}

\begin{lemma}
  \label{le:bintree}
  $P_n\in \mcal C_{ 2 n}\setminus \mcal C_{ 2 n- 1}$
  for all $n\ge 1$.
\end{lemma}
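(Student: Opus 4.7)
The plan is to prove both $P_n \in \mcal C_{2n}$ and $P_n \notin \mcal C_{2n-1}$ by simultaneous induction on $n$. The upper bound is routine from the recursive definition: $P_1 = Q \ipomconcat Q \in \mcal C_2$, and if $P_n \in \mcal C_{2n}$ then $P_n \otimes P_n \in \mcal C_{2n+1}$, so $P_{n+1} = Q \ipomconcat (P_n \otimes P_n) \in \mcal C_{2n+2}$.

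For the lower bound, the base case is immediate since $\mcal C_1$ contains only discrete iposets while $P_1$ has an arrow. For the inductive step, assume $P_{n-1} \notin \mcal C_{2n-3}$ and suppose $P_n \in \mcal C_{2n-1}$ for contradiction. A first reduction uses that $P_n$ has a unique minimum $r$, which lies below every other element, making $P_n$ parallel-indecomposable. Levi's lemma~(Lemma~\ref{le:Levi}) applied to any decomposition $P_n = R_1 \otimes \cdots \otimes R_k$ forces all but one $R_i$ to equal $\id_0$, so $P_n \in \mcal C_{2n-1} = \mcal C_{2n-2}^\otimes$ collapses to $P_n \in \mcal C_{2n-2}$.

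Writing $P_n = A_1 \ipomconcat \cdots \ipomconcat A_\ell$ with each $A_i \in \mcal C_{2n-3}$, I would analyse how the two parallel copies $L, R$ of $P_{n-1}$ inside $P_n$ are distributed among the pieces. Since $L$ and $R$ are pairwise incomparable but the gluing creates cross arrows $(A_i \setminus t_{A_i}) \times (A_j \setminus s_{A_j})$ for $i<j$, a body element of $L$ in $A_i$ and a body element of $R$ in $A_j$ with $i \neq j$ would produce an unwanted comparability. Hence all body elements of $L$ and $R$ reside in the body of a single common piece $A_k$. Decomposing that piece as $A_k = B_1 \otimes \cdots \otimes B_m$ with $B_j \in \mcal C_{2n-4}$, and invoking the connectedness of $L$ via its own unique minimum $r_L$, the body of $L$ is forced into a single branch $B_{j_L}$; symmetrically, body-$R \subseteq B_{j_R}$ with $j_L \neq j_R$.

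The last step is to recover a decomposition of $P_{n-1}$ inside $\mcal C_{2n-3}$ (or lower) from $B_{j_L}$ together with the $L$-elements absorbed into the interfaces $s_{A_k}, t_{A_k}$; these missing elements form at most two antichains, which can be re-attached to $B_{j_L}$ via trivial gluings of singletons from $\mcal S \subseteq \mcal C_0$ and an application of the interchange law~(Lemma~\ref{le:interchange-eq}). This produces $P_{n-1}$, as an iposet with empty interfaces, of alternation depth at most $2n-3$, contradicting the induction hypothesis. The main obstacle is precisely this reconstitution: one must carefully track how interface elements of $L$ partition across $B_{j_L}$'s source and target, handle degenerate cases where $r_L$ itself lies in an interface and body-$L$ splits into two connected components (each a copy of $P_{n-2}$), and argue that the reassembly never raises the alternation depth above $2n-3$. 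A cleaner route is to strengthen the induction hypothesis so that it forbids \emph{every} iposet whose underlying poset is isomorphic to $P_{n-1}$ from $\mcal C_{2n-3}$, irrespective of the interface structure, so that the presence of $P_{n-1}$ in $B_{j_L}$ (with whatever interfaces) immediately yields the desired contradiction.
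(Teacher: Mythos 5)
Your upper bound and the reduction of $P_n$ to a gluing decomposition (via connectedness / the unique minimum) are fine and agree with the paper. But the core of the lower-bound argument has a genuine gap, which you yourself identify as ``the main obstacle'': the reconstitution of a depth-$\le 2n-3$ decomposition of $P_{n-1}$ from the branch $B_{j_L}$. Nothing in your analysis guarantees that $B_{j_L}$ contains an induced (let alone isomorphic) copy of $P_{n-1}$: only those points of $L$ that are born and die inside $A_k$ are forced into $B_{j_L}$, while the remaining points of $L$ live in the gluing interfaces $t_{A_{k-1}}=s_{A_k}$ and $t_{A_k}=s_{A_{k+1}}$ and may be spread over several pieces and several parallel branches of those pieces. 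Re-attaching them ``via trivial gluings of singletons'' and Lemma~\ref{le:interchange-eq} is exactly the step that could increase the alternation depth, and no argument is given that it does not. The proposed strengthening of the induction hypothesis does not close the gap, since it still presupposes that some $B_j$ has underlying poset $P_{n-1}$. There is also a smaller unhandled case in the intermediate structural claim: ``all body elements of $L$ and $R$ lie in one common piece'' tacitly assumes each copy has at least one body element; a copy all of whose points lie in interfaces escapes your case analysis.

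The paper's proof avoids locating the copies of $P_{n-1}$ entirely, and the tool you are missing is the Decomposition Lemma~\ref{le:decomp}. It first shows $P_n\otimes P_n\notin\mcal C_{2n}$: membership in $\mcal C_{2n}=\mcal C_{2n-1}^{\ipomconcat}$ would make $P_n\otimes P_n$ simultaneously a nontrivial gluing and a nontrivial parallel product with neither factor discrete (since $P_n\notin\mcal C_1$), contradicting Lemma~\ref{le:decomp}. Then, for $P_{n+1}=Q\ipomconcat(P_n\otimes P_n)$, connectedness rules out a parallel decomposition, and Levi's lemma (Lemma~\ref{le:Levi}) together with Lemma~\ref{le:decomp} shows that the only nontrivial $\ipomconcat$-decomposition of $P_{n+1}$ is the given one; hence $P_{n+1}\in\mcal C_{2n+1}$ would put $P_n\otimes P_n$ (and then $P_n$) one level too low, contradicting the induction hypothesis. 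I recommend you restructure your induction around $P_n\otimes P_n$ and Lemma~\ref{le:decomp} in this way rather than attempting the reconstitution.
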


\begin{proof}
  By induction.  For $n= 1$, $P_1\notin \mcal C_1$, but
  $Q\in \mcal C_0\subseteq \mcal C_1$ and hence
  $P_1= Q\ipomconcat Q\in \mcal C_2= \mcal C_1^\ipomconcat$.

  Now for $n\ge 1$, suppose $C_{2n-1} \not \ni P_n\in \mcal C_{ 2 n}$.
  We use Lemma~\ref{le:decomp} to show that
  $P_n\otimes P_n\in \mcal C_{ 2 n+ 1}\setminus \mcal C_{ 2 n}$:
  Obviously
  $P_n\otimes P_n\in \mcal C_{ 2 n+ 1}= \mcal C_{ 2 n}^\otimes$.  If
  $P_n\otimes P_n\in \mcal C_{ 2 n}= \mcal C_{ 2 n- 1}^\ipomconcat$, then
  $P_n\otimes P_n= Q_{1}\ipomconcat\dotsm\ipomconcat Q_k$ for some
  $Q_1,\dotsc, Q_k\in \mcal C_{ 2 n- 1}$.  Yet $P_n\notin \mcal C_1$,
  which contradicts Lemma~\ref{le:decomp}.

  Now to $P_{ n+ 1}= Q\ipomconcat( P_n\otimes P_n)$.  Trivially,
  $P_{ n+ 1}\in \mcal C_{ 2 n+ 2}= \mcal C_{ 2 n+ 1}^\ipomconcat$.  Suppose
  $P_{ n+ 1}\in \mcal C_{ 2 n+ 1}= \mcal C_{ 2 n}^\otimes$.
  $P_{ n+ 1}$ is connected, hence not a parallel product, so that
  $P_{ n+ 1}$ must already be in
  $\mcal C_{ 2 n}= \mcal C_{ 2 n- 1}^\ipomconcat$ and therefore
  $P_{ n+ 1}= R_1\ipomconcat R_2$. Then, by Levi's lemma, there is an iposet
  $S$ such that either $Q= R_1\ipomconcat S$ and
  $S\ipomconcat( P_n\otimes P_n)= R_2$ or $R_1= Q\ipomconcat S$ and
  $S\ipomconcat R_2= P_2\otimes P_n$.  In the second case,
  $S\ipomconcat R_2= P_2\otimes P_n$, which again contradicts
  Lemma~\ref{le:decomp}; in the first case, both $R_1$ and $S$ must be
  single points (with suitable interfaces), so that either
  $R_1= \idpos 0 1 1$ and $R_2= P_{ n+ 1}$ (with an extra starting
  interface) or $R_1= Q$ and $R_2= P_n\otimes P_n$. This shows that
  $P_{ n+ 1}= Q\ipomconcat( P_n\otimes P_n)$ is the only non-trivial
  $\ipomconcat$-de\-com\-po\-si\-tion of $P_{ n+ 1}$. Thus
  $P_n\in \mcal C_{ 2 n- 1}$, a contradiction, and therefore
  $P_{ n+ 1}\notin \mcal C_{ 2 n+ 1}$.  \qed
\end{proof}

\begin{corollary}
  $\mcal C_{ 2 n- 1}\subsetneq \mcal C_{ 2 n}$ for all $n\ge 1$, hence
  the $\mcal C_n$ hierarchy does not collapse, and neither does the
  $\mcal D_n$ hierarchy.
\end{corollary}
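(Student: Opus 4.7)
The plan is to derive everything from Lemma~\ref{le:bintree} together with Lemma~\ref{le:trivial_hierarchy_inclusions}; no genuinely new combinatorial work is required. First, the non-strict inclusion $\mcal C_{2n-1} \subseteq \mcal C_{2n}$ is given by Lemma~\ref{le:trivial_hierarchy_inclusions}, and strictness follows because $P_n \in \mcal C_{2n} \setminus \mcal C_{2n-1}$ by Lemma~\ref{le:bintree}. This settles the main equation of the corollary.

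For the non-collapse of the $\mcal C_n$ hierarchy I would argue by contradiction. Suppose $\bar{\mcal S} = \mcal C_K$ for some finite $K$; choosing $n$ large enough that $2n - 1 \ge K$ and iterating Lemma~\ref{le:trivial_hierarchy_inclusions} gives $\mcal C_K \subseteq \mcal C_{2n-1}$, so that $P_n \in \bar{\mcal S} = \mcal C_K \subseteq \mcal C_{2n-1}$, contradicting Lemma~\ref{le:bintree}. The $\mcal D_n$ hierarchy is then handled by piggybacking through the cross-inclusion $\mcal D_n \subseteq \mcal C_{n+1}$, also provided by Lemma~\ref{le:trivial_hierarchy_inclusions}: if $\bar{\mcal S} = \mcal D_K$ for some $K$, then $\bar{\mcal S} = \mcal D_K \subseteq \mcal C_{K+1} \subseteq \bar{\mcal S}$ forces $\mcal C_{K+1} = \bar{\mcal S}$, which has just been ruled out.

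There is no substantive obstacle: the combinatorial content of the corollary already sits in Lemma~\ref{le:bintree}, and the corollary itself is a short packaging step. An alternative, more symmetric route would be to mirror the construction of the witnesses $P_n$ by swapping the roles of $\ipomconcat$ and $\otimes$ and to rerun the argument of Lemma~\ref{le:bintree} on this dual family, thereby obtaining $\mcal D_{2n-1} \subsetneq \mcal D_{2n}$ directly; I would only invest in that redundancy if strictness at every odd level of the $\mcal D$ hierarchy were explicitly required, which the stated corollary does not demand.
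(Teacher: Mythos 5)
Your proposal is correct and follows essentially the same route as the paper: strictness of $\mcal C_{2n-1}\subsetneq\mcal C_{2n}$ comes from the witnesses $P_n$ of Lemma~\ref{le:bintree}, and the $\mcal D_n$ hierarchy is handled via the cross-inclusions of Lemma~\ref{le:trivial_hierarchy_inclusions} (the paper phrases this as the direct chain $\mcal D_{2n-2}\subseteq\mcal C_{2n-1}\subsetneq\mcal C_{2n}\subseteq\mcal D_{2n+1}$ rather than your proof by contradiction, but the content is the same). No gaps.
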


\begin{proof}
  The last statement follows from
  $\mcal D_{ 2 n- 2}\subseteq \mcal C_{ 2 n- 1}\subsetneq \mcal C_{ 2
    n}\subseteq \mcal D_{ 2 n+ 1}$. \qed
\end{proof}

\begin{corollary}
  For all $n\in \Nat$, $\bar{ \mcal S}_0\not\subseteq \mcal C_n$ and
  $\bar{ \mcal S}_0\not\subseteq \mcal D_n$.
\end{corollary}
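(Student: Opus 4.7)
The plan is to use the witnesses $P_n$ from Lemma~\ref{le:bintree} directly. Recall the observation made just before that lemma: each $P_n$ is series-parallel, being built from $Q = \idpos{0}{0}{1}$ by alternating $\ipomconcat$ and $\otimes$ without ever using a singleton iposet with non-empty interface. Hence $\{P_n\mid n\ge 1\}\subseteq \bar{\mcal S}_0$.

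For $\bar{\mcal S}_0\not\subseteq \mcal C_n$, fix $n$ and pick any $k\ge 1$ with $2k-1\ge n$. By repeated application of Lemma~\ref{le:trivial_hierarchy_inclusions} the $\mcal C_\bullet$ hierarchy is monotone, so $\mcal C_n\subseteq \mcal C_{2k-1}$. Lemma~\ref{le:bintree} gives $P_k\notin \mcal C_{2k-1}$, so in particular $P_k\notin \mcal C_n$. Since $P_k\in \bar{\mcal S}_0$, this establishes non-containment at level $n$.

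For $\bar{\mcal S}_0\not\subseteq \mcal D_n$, the argument is the same after one extra inclusion step: Lemma~\ref{le:trivial_hierarchy_inclusions} yields $\mcal D_n\subseteq \mcal C_{n+1}$. Choose $k$ with $2k-1\ge n+1$, whence $\mcal D_n\subseteq \mcal C_{n+1}\subseteq \mcal C_{2k-1}$, and again $P_k\in \bar{\mcal S}_0\setminus \mcal C_{2k-1}$ gives $P_k\notin \mcal D_n$.

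There is no real obstacle here: the corollary is a direct packaging of Lemma~\ref{le:bintree} together with the monotonicity provided by Lemma~\ref{le:trivial_hierarchy_inclusions}. The only thing to be careful about is to record explicitly that the $P_n$ lie in $\bar{\mcal S}_0$, which is immediate from their recursive definition using only $Q=\idpos{0}{0}{1}$, $\ipomconcat$, and $\otimes$.
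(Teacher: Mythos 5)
Your proof is correct and follows essentially the same route as the paper: the witnesses $P_k$ are series-parallel, Lemma~\ref{le:bintree} excludes $P_k$ from $\mcal C_{2k-1}$, and the hierarchy inclusions transfer this to any fixed level $n$ and to the $\mcal D_n$ side. You even use the inclusion in the correct direction ($\mcal D_n\subseteq \mcal C_{n+1}$) where the paper's one-line proof cites the symmetric one.
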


\begin{proof}
  As we have already noted above, $P_n\in \bar{ \mcal S}_0$ for all
  $n$, which together with Lemma~\ref{le:bintree} implies the first
  statement.  The second follows from
  $\mcal C_n\subseteq \mcal D_{ n+ 1}$. \qed
\end{proof}


We have seen that the $\mcal C_n$ and $\mcal D_n$ hierarchies are
properly infinite and that they contain the set of sp-posets only in
the limit $\bar{ \mcal S}= \bigcup_{ n\ge 0} \mcal C_n= \bigcup_{ n\ge
  0} \mcal D_n$.  

Finally, we turn to the question of characterising this limit
$\bar{ \mcal S}$ geometrically.  Recalling that a poset is
series-parallel iff if it does not contain an induced subposet
isomorphic to \N, we would like a similar characterisation using
forbidden subposets for the gp-(i)posets.  We expose five such
forbidden subposets, but leave the question of whether there are
others to future work.

Define the following five posets on six points:
\begin{gather*}
  \NN= \pomset{ \cdot \ar[r] & \cdot \\ \cdot \ar[r] \ar[ur] & \cdot \\
    \cdot \ar[r] \ar[ur] & \cdot} \qquad%
  \NPLUS = \pomset{ & \cdot \ar[r] & \cdot \\ \cdot \ar[r] \ar[ur] &
    \cdot \\ \cdot \ar[r] \ar[ur] & \cdot} \qquad%
  \NMINUS = \pomset{ & \cdot \ar[r] & \cdot \\ & \cdot \ar[r] \ar[ur]
    & \cdot \\ \cdot \ar[r] & \cdot \ar[ur]} \\
  \TC= \pomset{ \cdot \ar[r]\ar[ddr] & \cdot \\ \cdot \ar[r] \ar[ur] & 
    \cdot \\ \cdot \ar[r] \ar[ur] & \cdot} \qquad%
  \LN= \pomset{ \cdot\ar[r] & \cdot\ar[r] & \cdot \\
  \cdot\ar[r]\ar[urr] & \cdot\ar[r] & \cdot}
\end{gather*}

\begin{proposition}
  \label{pr:forbidden}
  If $P\in \bar{ \mcal S}$, then $P$ does not contain \NN, \NPLUS,
  \NMINUS, \TC, or \LN\ as induced subposets.
\end{proposition}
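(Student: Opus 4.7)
The plan is structural induction on $P \in \bar{\mcal S}$. For the base case, singletons have at most one element and so vacuously contain no six-element induced subposet.

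For the parallel step $P = P_1 \otimes P_2$, observe that each of the five forbidden posets is connected as an undirected Hasse graph. For instance, $\NN$ is spanned by the path visiting $x_1, y_1, x_2, y_2, x_3, y_3$ in turn (where $x_i$ and $y_i$ denote its left and right columns), and the other four admit similar spanning paths. Since the underlying poset of $P_1 \otimes P_2$ is the disjoint union $P_1 \sqcup P_2$ with no cross-comparabilities, any induced copy of a connected six-point poset must sit entirely inside $P_1$ or inside $P_2$, contradicting the inductive hypothesis.

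For the gluing step $P = P_1 \ipomconcat P_2$, suppose $F \in \{\NN, \NPLUS, \NMINUS, \TC, \LN\}$ is an induced subposet of $P$, and partition its six points as $A \sqcup B \sqcup C$ where $A \subseteq P_1 \setminus t_1[m]$, $B$ lies in the shared interface $t_1[m] = s_2[m]$, and $C \subseteq P_2 \setminus s_2[m]$. If $A$ or $C$ is empty, then $F$ is an induced subposet of $P_2$ or $P_1$, contradicting the inductive hypothesis. Otherwise, the definition of gluing forces three structural features of the partition: every $a \in A$ satisfies $a < c$ in $F$ for every $c \in C$, since gluing inserts all such arrows; $B$ is an antichain in $F$, since $B$ consists of points that are maximal in $P_1$ and minimal in $P_2$; and no element of $B$ lies below any element of $A$ or above any element of $C$. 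Consequently $A$ is a down-set of $F$, $C$ is an up-set of $F$, and $B = F \setminus (A \cup C)$ must be an antichain in $F$.

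The main obstacle is then the finite case check that no such configuration exists for any of the five forbidden posets. The first two conditions restrict $A$ to a down-set of $F$ and $C$ to an up-set contained in the common strict upper bounds of $A$, which cuts the candidate pairs $(A, C)$ per $F$ down to a handful. For instance, in $\NN$, taking $A = \{x_2, x_3\}$ forces $C = \{y_2\}$ as the sole common upper bound, and then $B = \{x_1, y_1, y_3\}$ already contains the comparable pair $x_1 < y_1$; the other nonempty $A \subseteq \{x_1, x_2, x_3\}$ all fail similarly. Analogous short enumerations handle $\NPLUS$, $\NMINUS$, $\TC$, and $\LN$, each one exploiting the specific zigzag of $<$-relations in the poset to force a comparable pair into every candidate $B$. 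The uniform reason is that each forbidden poset is sufficiently $<$-dense that no antichain $B$ can leave behind a nontrivial serial decomposition of $F \setminus B$.
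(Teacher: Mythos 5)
Your proof is correct, and it follows the same overall skeleton as the paper's (structural induction on the generating term; connectedness disposes of $\otimes$; the real work is in the gluing step), but it organizes the gluing step differently and, I would say, more systematically. The paper fixes one forbidden poset, \NN, and does a bespoke element chase: assuming an induced \NN\ in $P_1\ipomconcat P_2$, it tracks each of the six points in turn, using the forced arrows from $P_1\setminus t_1[m]$ to $P_2\setminus s_2[m]$ and the non-relations of \NN\ to conclude that all six points lie in one factor; the other four posets are "left to the reader." You instead extract once and for all the necessary shape of any partition $A\sqcup B\sqcup C$ induced by a gluing --- $A$ a down-set, $C$ an up-set with $A\times C\subseteq\mathord{<}$, $B$ an antichain not below $A$ nor above $C$ --- and reduce all five cases to a short finite enumeration over candidate pairs $(A,C)$. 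All of these structural claims check out against the definition of $\ipomconcat$ (in particular, the restriction of the glued order to $B\cup C$ really is the induced order of $P_2$, so the $A=\emptyset$ and $C=\emptyset$ cases correctly fall back on the inductive hypothesis), and I verified that the enumeration fails for every nonempty $A$ in each of the five posets, as you assert. What your packaging buys is uniformity and reusability: the same criterion ("$F$ admits no serial decomposition across an antichain interface") handles all five forbidden posets identically and would extend to testing any further candidate, whereas the paper's chase must be redone per poset. What it costs is that the finite check, though small, is only exemplified for \NN\ and asserted for the rest --- which is the same level of incompleteness as the paper's "very similar and left to the reader," so this is not a gap relative to the published argument.
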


\begin{proof}
  We only show the proof for \NN; the others are very similar and are
  left to the reader.  We can assume that $P$ is connected.  We use
  structural induction, noting that all $P\in \mcal S$ are \NN-free,
  so it remains to show that $P\ipomconcat Q$ is \NN-free whenever $P$ and
  $Q$ are.
  
  By contraposition, suppose $P\ipomconcat Q$ contains the induced
  sub-\NN\ $\smash{\!\pomset{ a\ar[r] & b \\ c\ar[r]\ar[ur] & d \\
      e\ar[r]\ar[ur] & f}\!}$. Then we show that either $P$ or $Q$
  also have an induced sub-\NN.

  Assume first that $a\in Q$.  Then $a\le_Q b$, hence also $b\in Q$,
  but $b\notin Q_{\min}$, that is, $b\notin s_Q$.  Now
  $e\not\le_{P\ipomconcat Q} b$, which forces $e\in t_P$ and therefore
  in $e\in Q$.  This in turn implies that $d, f\in Q$ and in particular
  $e\le_Q f$.  Thus $f\notin Q_{\min}$ and therefore $f\notin s_Q$,
  which forces $c\in t_P$ and therefore $c\in Q$.  This
  shows that \NN\ lies entirely in $Q$.

 Finally assume that $a\notin Q$. Then $a\in P\setminus t_P$, and as
  $a\not\le_{P\ipomconcat Q} d$ and $a\not\le_{P\ipomconcat Q} f$, we must have $d, f\in
  s_Q$ and therefore $d, f\in P$.  This forces $c, e\in P$ and in particular
  $e\le_P f$.  Thus $e\notin P_{\min}$, whence $e\notin t_P$.
  This in turn forces $b\in s_Q$ and therefore $b\in P$.  This shows
  that  \NN\ lies entirely in $P$. \qed
\end{proof}

\section{Experiments}
\label{se:experi}

We have encoded most of the constructions in this paper with Python to
experiment with gluing-parallel (i)posets.  Notably,
Proposition~\ref{pr:forbidden} is, in part, a result of these
experiments.\footnote{Our software is available at
  \url{http://www.lix.polytechnique.fr/~uli/posets/}} Our prototype is
rather inefficient, which explains why some numbers are ``n.a.'',
\ie~not available, in Table~\ref{ta:numposets}.

Using procedures to generate non-isomorphic posets of different types,
we have used our software to verify that
\begin{enumerate}
\item all posets on five points are in $\bar{ \mcal S}$,
  \ie~gp-posets;
\item \NN, \NPLUS, \NMINUS, \TC, and \LN\ are the only six-point
  posets that are not in $\bar{ \mcal S}$.
\end{enumerate}
We provide tables of gluing-parallel decompositions of posets in
appendix to prove these claims.

We have also used our software to count non-isomorphic posets and
iposets of different types, see Table~\ref{ta:numposets}.  We note
that $\mathsf{P}$ and $\mathsf{SP}$ are sequences no.~A000112
and~A003430, respectively, in the On-Line Encyclopedia of Integer
Sequences (OEIS).\footnote{%
  \label{OEISlabels}%
  See \url{http://oeis.org/A000112}, \url{oeis.org/A003430}, and
  \url{oeis.org/A079566}.} Sequences $\mathsf{GPC}$, $\mathsf{SIP}$,
$\mathsf{IP}$, and $\mathsf{GPI}$ are unknown to the OEIS.

\begin{table}[tbp]
  \centering
  \caption{Different types of posets with $n$ points: all posets;
    sp-posets; gp-posets; (weakly) connected gp-posets; iposets with
    starting interfaces only; iposets; gp-iposets.}
  \label{ta:numposets}
  \begin{tabular}{r|rrrrrrrr}
    $n$ & $\mathsf{P}( n)$ & $\mathsf{SP}( n)$ & $\mathsf{GP}( n)$ &
    $\mathsf{GPC}( n)$ & $\mathsf{SIP}( n)$ & $\mathsf{IP}( n)$ &
    $\mathsf{GPI}( n)$ \\\hline
    0 & 1 & 1 & 1 & 1 & 1 & 1 & 1 \\
    1 & 1 & 1 & 1 & 1 & 2 & 4 & 4 \\
    2 & 2 & 2 & 2 & 1 & 5 & 17 & 16 \\
    3 & 5 & 5 & 5 & 3 & 16 & 86 & 74 \\
    4 & 16 & 15 & 16 & 10 & 66 & 532 & 419 \\
    5 & 63 & 48 & 63 & 44 & 350 & n.a. & 2980 \\
    6 & 318 & 167 & 313 & 233 & n.a. & n.a. & 26566
  \end{tabular}
\end{table}

The single iposet on two points which is not gluing-parallel is the
symmetry $[ 2]: 2\to 2$ with $s( 1)= 1$, $s( 2)= 2$, $t( 1)= 2$, and
$t( 2)= 1$.
The prefix of $\mathsf{GP}$ we were able to compute equals the
corresponding prefix of sequence no.~A079566 in the
OEIS,$^{ \ref{OEISlabels}}$
which counts the number of connected
(undirected) graphs which have no induced 4-cycle $C_4$.  We leave it
to the reader to ponder upon the relation between gp-posets and
$C_4$-free connected graphs.


\bibliographystyle{myabbrv}
\bibliography{mybib}

\newpage
\appendix

\section*{Appendix}

The following tables show gluing-parallel decompositions of all
(weakly) connected posets on four points, all connected posets on five
points, and all connected posets on six points except for the five
posets \NN, \NPLUS, \NMINUS, \TC, and \LN\ which are not
gluing-parallel.

Given that disconnected posets can be decomposed into posets with
fewer points using $\otimes$ and that all posets on fewer than four
points are series-parallel, hence gluing-parallel, these tables show
the claims in Section~\ref{se:experi}: All posets on five points are
gluing-parallel, as are all but the five exceptional posets \NN,
\NPLUS, \NMINUS, \TC, and \LN\ on six points.



\tikzset{->, >=latex', x=1cm, y=.55cm, label
  distance=-.25cm, shorten <=-3pt, shorten >=-3pt}

\begin{longtable}{c|c|c|c}
  \caption{Gluing-Parallel decompositions of connected posets on four
    points} \\
  no. & Poset & \multicolumn{2}{c}{Decomposition} \endhead\hline
1 &
\begin{tikzpicture}
\node (1) at (0,0) {\intpt};
\node (2) at (1,0) {\intpt};
\node (3) at (2,0) {\intpt};
\node (4) at (3,0) {\intpt};
\path (1) edge (2);
\path (2) edge (3);
\path (3) edge (4);
\end{tikzpicture}
&
\begin{tikzpicture}
\node (1) at (0,0) {\intpt};
\node (2) at (1,0) {\intpt};
\path (1) edge (2);
\end{tikzpicture}
&
\begin{tikzpicture}
\node (1) at (0,0) {\intpt};
\node (2) at (1,0) {\intpt};
\path (1) edge (2);
\end{tikzpicture}
\\\hline
2 &
\begin{tikzpicture}
\node (1) at (0,0) {\intpt};
\node (2) at (1,0) {\intpt};
\node (3) at (2,0) {\intpt};
\node (4) at (2,-1) {\intpt};
\path (1) edge (2);
\path (2) edge (3);
\path (2) edge (4);
\end{tikzpicture}
&
\begin{tikzpicture}
\node (1) at (0,0) {\intpt};
\node (2) at (1,0) {\intpt};
\path (1) edge (2);
\end{tikzpicture}
&
\begin{tikzpicture}
\node (1) at (0,0) {\intpt};
\node (2) at (0,-1) {\intpt};
\end{tikzpicture}
\\\hline
3 &
\begin{tikzpicture}
\node (1) at (0,0) {\intpt};
\node (2) at (0,-1) {\intpt};
\node (3) at (1,0) {\intpt};
\node (4) at (2,0) {\intpt};
\path (1) edge (3);
\path (2) edge (3);
\path (3) edge (4);
\end{tikzpicture}
&
\begin{tikzpicture}
\node (1) at (0,0) {\intpt};
\node (2) at (0,-1) {\intpt};
\end{tikzpicture}
&
\begin{tikzpicture}
\node (1) at (0,0) {\intpt};
\node (2) at (1,0) {\intpt};
\path (1) edge (2);
\end{tikzpicture}
\\\hline
4 &
\begin{tikzpicture}
\node (1) at (0,0) {\intpt};
\node (2) at (0,-1) {\intpt};
\node (3) at (1,0) {\intpt};
\node (4) at (1,-1) {\intpt};
\path (1) edge (3);
\path (1) edge (4);
\path (2) edge (3);
\path (2) edge (4);
\end{tikzpicture}
&
\begin{tikzpicture}
\node (1) at (0,0) {\intpt};
\node (2) at (0,-1) {\intpt};
\end{tikzpicture}
&
\begin{tikzpicture}
\node (1) at (0,0) {\intpt};
\node (2) at (0,-1) {\intpt};
\end{tikzpicture}
\\\hline
5 &
\begin{tikzpicture}
\node (1) at (0,0) {\intpt};
\node (2) at (1,0) {\intpt};
\node (3) at (1,-1) {\intpt};
\node (4) at (2,0) {\intpt};
\path (1) edge (2);
\path (1) edge (3);
\path (2) edge (4);
\path (3) edge (4);
\end{tikzpicture}
&
\begin{tikzpicture}
\node (1) at (0,0) {\intpt};
\end{tikzpicture}
&
\begin{tikzpicture}
\node (1) at (0,0) {\intpt};
\node (2) at (0,-1) {\intpt};
\node (3) at (1,0) {\intpt};
\path (1) edge (3);
\path (2) edge (3);
\end{tikzpicture}
\\\hline
6 &
\begin{tikzpicture}
\node (1) at (0,0) {\intpt};
\node (2) at (1,0) {\intpt};
\node (3) at (1,-1) {\intpt};
\node (4) at (2,0) {\intpt};
\path (1) edge (2);
\path (1) edge (3);
\path (2) edge (4);
\end{tikzpicture}
&
\begin{tikzpicture}
\node (1) at (0,0) {\intpt};
\end{tikzpicture}
&
\begin{tikzpicture}
\node (1) at (0,0) {\intpt};
\node (2) at (0,-1) {\intpt};
\node (3) at (1,0) {\intpt};
\path (1) edge (3);
\end{tikzpicture}
\\\hline
7 &
\begin{tikzpicture}
\node (1) at (0,0) {\intpt};
\node (2) at (1,0) {\intpt};
\node (3) at (1,-1) {\intpt};
\node (4) at (1,-2) {\intpt};
\path (1) edge (2);
\path (1) edge (3);
\path (1) edge (4);
\end{tikzpicture}
&
\begin{tikzpicture}
\node (1) at (0,0) {\intpt};
\end{tikzpicture}
&
\begin{tikzpicture}
\node (1) at (0,0) {\intpt};
\node (2) at (0,-1) {\intpt};
\node (3) at (0,-2) {\intpt};
\end{tikzpicture}
\\\hline
8 &
\begin{tikzpicture}
\node (1) at (0,0) {\intpt};
\node (2) at (0,-1) {\intpt};
\node (3) at (1,0) {\intpt};
\node (4) at (2,0) {\intpt};
\path (1) edge (3);
\path (2) edge (4);
\path (3) edge (4);
\end{tikzpicture}
&
\begin{tikzpicture}
\node (1) at (0,0) {\intpt};
\node [label=right:{\tiny 1}] (2) at (0,-1) {\outpt};
\end{tikzpicture}
&
\begin{tikzpicture}
\node (1) at (0,0) {\intpt};
\node [label=left:{\tiny 1}] (2) at (0,-1) {\inpt};
\node (3) at (1,0) {\intpt};
\path (1) edge (3);
\path (2) edge (3);
\end{tikzpicture}
\\\hline
9 &
\begin{tikzpicture}
\node (1) at (0,0) {\intpt};
\node (2) at (0,-1) {\intpt};
\node (3) at (1,0) {\intpt};
\node (4) at (1,-1) {\intpt};
\path (1) edge (3);
\path (1) edge (4);
\path (2) edge (4);
\end{tikzpicture}
&
\begin{tikzpicture}
\node (1) at (0,0) {\intpt};
\node [label=right:{\tiny 1}] (2) at (0,-1) {\outpt};
\end{tikzpicture}
&
\begin{tikzpicture}
\node (1) at (0,0) {\intpt};
\node [label=left:{\tiny 1}] (2) at (0,-1) {\inpt};
\node (3) at (1,0) {\intpt};
\path (2) edge (3);
\end{tikzpicture}
\\\hline
10 &
\begin{tikzpicture}
\node (1) at (0,0) {\intpt};
\node (2) at (0,-1) {\intpt};
\node (3) at (0,-2) {\intpt};
\node (4) at (1,0) {\intpt};
\path (1) edge (4);
\path (2) edge (4);
\path (3) edge (4);
\end{tikzpicture}
&
\begin{tikzpicture}
\node (1) at (0,0) {\intpt};
\node (2) at (0,-1) {\intpt};
\node (3) at (0,-2) {\intpt};
\end{tikzpicture}
&
\begin{tikzpicture}
\node (1) at (0,0) {\intpt};
\end{tikzpicture}
\\\hline

\end{longtable}


\begin{longtable}{c|c|c|c}
  \caption{Gluing-parallel decompositions of connected posets on five
    points} \\
  no. & Poset & \multicolumn{2}{c}{Decomposition} \endhead\hline
  \input decomp5-edited
\end{longtable}


\begin{longtable}{c|c|c|c}
  \caption{Gluing-parallel decompositions of connected gp-posets on
    six points} \\
  no. & Poset & \multicolumn{2}{c}{Decomposition} \endhead\hline
  \input decomp6-edited
\end{longtable}

\end{document}